\documentclass[journal,11pt,onecolumn,draftclsnofoot,]{IEEEtran}

\ifCLASSINFOpdf
\else
   \usepackage[dvips]{graphicx}
\fi
\usepackage{url}

\hyphenation{op-tical net-works semi-conduc-tor}

\usepackage{graphicx}
 \usepackage{upgreek}
\usepackage{cite}
\usepackage{amsmath,amsfonts}
\usepackage{algorithmic}
\usepackage{graphicx}
\usepackage{textcomp}
\usepackage{xcolor}
\usepackage{amsthm}
\usepackage{caption}
\usepackage{relsize}
\usepackage{subcaption}
\usepackage{float}
\usepackage{mathbbol}
\def\BibTeX{{\rm B\kern-.05em{\sc i\kern-.025em b}\kern-.08em
    T\kern-.1667em\lower.7ex\hbox{E}\kern-.125emX}}
    
\usepackage{systeme}
\usepackage{color}
\usepackage{pgfplots}
\usepackage{bbm}
    
  \newcommand{\figref}[1]{Fig.~\protect\ref{#1}}

                            \newcommand{\pto}{\overset{P}\longrightarrow }

\setlength\unitlength{1mm}

\long\def\comment#1{}


\DeclareMathOperator*{\argmax}{arg\,max}
\DeclareMathOperator*{\argmin}{arg\,min}

\newfont{\bbb}{msbm10 scaled 700}

\newfont{\bb}{msbm10 scaled 1100}


\newcommand{\ev}{{\bf e}}

\newcommand{\gv}{{\bf g}}

\newcommand{\qv}{{\bf q}}
\newcommand{\rv}{{\bf r}}
\newcommand{\sv}{{\bf s}}

\newcommand{\vv}{{\bf v}}
\newcommand{\xv}{{\bf x}}
\newcommand{\yv}{{\bf y}}
\newcommand{\zv}{{\bf z}}


\newcommand{\Am}{{\bf A}}

\newcommand{\Cm}{{\bf C}}

\newcommand{\Hm}{{\bf H}}
\newcommand{\Id}{{\bf I}}

\newcommand{\Rm}{{\bf R}}
\newcommand{\Sm}{{\bf S}}

\newcommand{\Vm}{{\bf V}}
\newcommand{\Xm}{{\bf X}}



\newcommand{\Sigmam}{\hbox{\boldmath$\Sigma$}}

\newcommand{\Omegam}{\hbox{\boldmath$\Omega$}}



\newtheorem{theorem}{Theorem}

\newtheorem{proposition}{Proposition}
\newtheorem{remark}{Remark}

\newtheorem{assumption}{Assumption}
\newtheorem{corollary}{Corollary}

\begin{document}

\title{Optimum GSSK Transmission in Massive MIMO Systems Using the Box-LASSO Decoder}

\author{Ayed M. Alrashdi, \IEEEmembership{Member, IEEE}, Abdullah E. Alrashdi, \IEEEmembership{Graduate Student Member, IEEE}, Amer Alghadhban, \IEEEmembership{Member, IEEE}, and Mohamed A. H. Eleiwa
\thanks{The authors are with the Department of Electrical Engineering, College of Engineering, University of Ha'il, P.O. Box 2440, Ha'il, 81441, Saudi Arabia.
A. E. Alrashdi is also with Saudi Aramco.}
\thanks{Corresponding author: Ayed M. Alrashdi (e-mail: am.alrashdi@uoh.edu.sa).
}
}

\markboth{IEEE xxx, VOL. xx, NO. x, xxx 2022}
{Shell \MakeLowercase{\textit{Alrashdi et al.}}: IEEE Wireless Communiation Letters}
\maketitle

\begin{abstract}
%
We propose in this work to employ the Box-LASSO, a variation of the popular LASSO method, as a low-complexity decoder in a massive multiple-input multiple-output (MIMO) wireless communication system.
The Box-LASSO is mainly useful for detecting simultaneously structured signals such as signals that are known to be sparse and bounded. 
One modulation technique that generates essentially sparse and bounded constellation points is the so-called generalized space-shift keying (GSSK) modulation. 
%
In this direction, we derive high dimensional sharp characterizations of various performance measures of the Box-LASSO such as the mean square error, probability of support recovery, and the element error rate, under independent and identically distributed (i.i.d.) Gaussian channels that are not perfectly known. 
In particular, the analytical characterizations can be used to demonstrate performance improvements of the Box-LASSO as compared to the widely used standard LASSO. Then, we can use these measures to optimally tune the involved hyper-parameters of Box-LASSO such as the regularization parameter. In addition, we derive optimum power allocation and training duration schemes in a training-based massive MIMO system. Monte Carlo simulations are used to validate these premises and to show the sharpness of the derived analytical results.
\end{abstract}
\begin{IEEEkeywords}
LASSO, box-constraint, performance analysis, channel estimation, spacial modulation, power allocation, massive MIMO.
\end{IEEEkeywords}

\IEEEpeerreviewmaketitle

\section{Introduction}
\label{sec:intro}

\IEEEPARstart{T}{he} least absolute selection and shrinkage operator (LASSO) \cite{tibshirani1996regression} is a widely used method to recover an unknown \textit{sparse} signal $\sv_0$ from noisy linear measurements
$\rv =  \Am \sv_{0} + \vv,$ 
by solving the following optimization problem: 
\begin{equation}
\label{EN_1}   
\widehat{\sv}= { \rm{arg}} \ \underset{{\sv \in \mathbb{R}^{n}}}{\operatorname{\min}} \ \|   \Am \sv - \rv  \|_2^2 + \gamma \| \sv \|_1,
\end{equation}
where $\| \cdot \|_2$, and $ \|\cdot \|_1$ represent the $\ell_2$-norm and $\ell_1$-norm of a vector, respectively. Furthermore, $ \Am$ is the measurement matrix, $\vv$ is the noise vector, and $\gamma>0$ is a regularization parameter that balances between the fidelity of the solution as controlled by the $\ell_2$-norm on one side, and the sparsity of the solution as enforced by the $\ell_1$-norm on the other hand. It allows for learning a sparse model where few of the entries are non-zero.
The LASSO reduces to the linear regression as $\gamma \to 0$.
The LASSO has been widely used in modern data science and signal processing applications such as in \cite{ting2009sparse,gui2012improved, bishop2006pattern}.
%

The LASSO is a special instance of a class of problems called non-smooth regularized convex optimization problems \cite{thrampoulidis2016precise}.
In recent years, various forms of sharp analysis of the asymptotic performance of such optimization problems have been studied under the assumption of noisy independent and identically distributed (i.i.d.) Gaussian measurements. 
They mostly take one of the following approaches.

The first is the approximate message passing (AMP) framework, which was utilized in \cite{donoho2009message, bayati2011dynamics, bayati2012lasso} to conduct a sharp asymptotic study of the performance of compressed sensing problems under the assumptions of i.i.d. Gaussian sensing matrix. 

The authors in \cite{guo2009single, kabashima2010statistical,rangan2012asymptotic} undertook a different approach that uses the replica method from statistical physics, which is a powerful high-dimensional analysis tool. 
However, it lacks rigorous mathematical justifications in some steps.
In addition, the high-dimensional error performance of different regularized estimators
has been previously considered using some heuristic arguments and numerical simulations in \cite{bean2013optimal} and \cite{el2013robust}.  

Another approach based on random matrix theory (RMT) \cite{couillet2011random} was taken in \cite{el2018impact, karoui2013asymptotic} to analyze the high-dimensional squared error performance of ridge regression. One major drawback of RMT is that it requires the involved optimization problems to admit a \emph{closed-form} solution which is not true for the general non-smooth convex optimization problems. 

The most recent approach is based on a newly developed framework that uses the convex Gaussian min-max theorem (CGMT) initiated by Stojnic \cite{stojnic2013framework} and further extended by Thrampoulidis \emph{et al.} in \cite{thrampoulidis2016precise}. It provides the analysis in a more natural and elementary way when compared to the previously discussed methods.
The CGMT has been applied to the performance analyses of various optimization problems. The asymptotic mean square error (MSE) have been analyzed for various regularized estimators in \cite{thrampoulidis2015regularized, thrampoulidis2015lasso, atitallah2017box, thrampoulidis2015asymptotically}.  Precise analysis of general performance measures such as the probability of support recovery and $\ell_1$-reconstruction error of the LASSO and Elastic Net was obtained in \cite{abbasi2016general,alrashdi2017precise,  alrashdi2019precise}.
The asymptotic symbol error rate (SER) of the box relaxation optimization has been derived for various modulation schemes in \cite{alrashdi2020optimum, hayakawa2020asymptotic, thrampoulidis2018symbol, atitallah2017ber}.
CGMT has also been used for the analysis of nonlinear measurements models (e.g., quantized measurements) in \cite{thrampoulidis2015lasso} and \cite{thrampoulidis2018performance}. 
\subsection{Contributions}

In this paper, instead of the standard LASSO in \eqref{EN_1}, we will use the so-called Box-LASSO \cite{atitallah2017box}, which is the same as the LASSO but with an additional box-constraint. 
We will formally define it in \eqref{Box-EN_1}. 
We propose using the Box-LASSO as a low-complexity decoder in massive MIMO communication systems with modern modulation methods such as the generalized space-shift keying (GSSK) \cite{jeganathan2008generalized} modulation and the generalized spatial modulation (GSM) \cite{younis2010generalised}. 
In such systems, the transmitted signal vector $\sv_0$ is inherently sparse and have elements belonging to a finite alphabet, which is an excellent setting for using the Box-LASSO.

Using the CGMT framework, this paper derives sharp asymptotic characterizations of the {mean square error}, {support recovery probability, and element error rate} of the Box-LASSO under the presence of uncertainties in the channel matrix that has i.i.d. Gaussian entries.
In addition, the analysis demonstrate that the Box-LASSO outperforms the standard one in all of the considered metrics.
The derived expressions can be used to optimally tune the involved hyper-parameters of the algorithm.
Furthermore, we study the application of the Box-LASSO to GSSK modulated massive MIMO systems and optimize their power allocation and training duration.

The additional contributions of this paper against the most related works such as \cite{atitallah2017box, hayakawa2020asymptotic, alrashdi2017precise, alrashdi2019precise,bereyhi2020detection} are summarized as follows:\\
$\bullet$ This paper considers the more practical and challenging scenario of \emph{imperfect} channel state information (CSI), whereas \cite{atitallah2017box,hayakawa2020asymptotic,bereyhi2020detection} only derived the analysis for the ideal case of perfect CSI.\\
$\bullet$ Even when the imperfect CSI case was studied in the previous works on the LASSO and Box-Elastic Net in \cite{alrashdi2017precise, alrashdi2019precise}, only a theoretical imperfect CSI model (the so-called Gauss-Markov model) was considered. On the other hand, this work presents the analysis under a more practical model of the imperfect CSI, which is the linear minimum mean square error (LMMSE) channel estimate in \eqref{eq:LMMSE} for a massive MIMO application.\\
$\bullet$ With this massive MIMO application in mind, we derive the asymptotically optimal power allocation and training duration schemes for GSSK signal recovery.\\
$\bullet$ We show that the derived power allocation optimization is nothing but the well-known scheme that maximizes the effective signal-to-noise ratio (SNR) in \cite{hassibi2003much}.
%

\subsection{Organization}
The rest of this paper is organized as follows. The system model, channel estimation and the proposed Box-LASSO decoder are discussed in Section~\ref{sec:set-up}.
Section~\ref{sec:Results} provides the main asymptotic results of the work. The application of Box-LASSO to a massive MIMO system and its power allocation and training duration optimization is presented in Section~\ref{sec:GSSK}. 
Finally, concluding remarks and future research directions are stated in Section~\ref{sec:conc}.
The proof of the main results is deferred to the Appendix.
\subsection{Notations}
Here, we gather the basic notations that are used throughout the paper. 
Bold face lower case letters (e.g., $\xv$) represent a column vector while $x_j$ is its $j^{th}$ entry.
For $\xv \in \mathbb{R}^n$, let $\| \xv \|_2 = \sqrt{\sum_{j=1}^n x_j^2}$, and $\|\xv \|_1= \sum_{j=1}^n |x_j|$.
Matrices are denoted by upper case letters such as $\Xm$, with $\Id_n$ being the $n \times n$ identity matrix.
$(\cdot)^T$ and $(\cdot)^{-1}$ are the transpose and inverse operators respectively. 
We use the standard notations $\mathbb{E}[\cdot],$ and $\mathbb{P}[\cdot]$ to denote the expectation of a random variable and probability of an event respectively. We write $X \sim p_x$ to denote that a random variable $X$ has a probability density/mass function $p_x$. In particular, the notation $\qv \sim \mathcal{N}(\mathbf{0},\Rm_{\qv})$ is used to denote that the random vector $\qv$ is normally distributed with $\mathbf{0}$ mean and covariance matrix $\Rm_{\qv} = \mathbb{E}[\qv \qv^T]$, where $\mathbf{0}$ represent the all-zeros vector of the appropriate size. 
The notation $\delta_\xi$ is used to represent a point-mass distribution at $\xi$.
We write $``\pto "$ to denote convergence in probability as $n \to \infty$.  We also use standard notation ${\rm{plim}}_{n\to\infty} \Theta_n = \Theta$ to denote that a sequence of random variables
$\Theta_{n},[n=1,2,...]$, converges in probability towards a constant $\Theta$. When writing $x^\star = \mathrm {arg} \min_x f(x)$, the operator $\mathrm{arg} \min$ returns any one of the possible minimizers of $f$. 
The function $Q(x) = \frac{1}{\sqrt{2 \pi}} \int_{x}^{\infty} e^{-u^2/2} {\rm d}u$ is the $Q$-function associated with the standard normal density.

\noindent
Finally,  for $a,\gamma,l,u \in \mathbb{R}$, such that $\gamma,u \geq 0, l \leq 0, $ we define the following functions:\\
$\bullet$ The \textit{saturated shrinkage} function $\mathcal{H}(a ; \gamma, l, u) :=\argmin_{l \leq s \leq u}$ $\frac{1}{2}(s-a)^2 + \gamma |s|$, which is given as: 
\begin{equation}\label{soft_TH}
\mathcal{H}(a ; \gamma, l, u) =
\begin{cases} 
         u & ,\text{if}  \ a \geq u + \gamma \\ 
           
         a - \gamma & ,\text{if}  \ \gamma < a < u + \gamma \\      
   
		 0 & ,\text{if} \  |a| \leq \gamma  \\
			
		 a + \gamma & ,\text{if}  \ l - \gamma < a < - \gamma \\      

         l  & ,\text{if} \  a \leq l - \gamma.
\end{cases}
\end{equation}
A plot of this function is depicted in \figref{fig:Soft}.\\
\begin{figure}
\begin{center}
\includegraphics[width=6.40cm, height = 4.5cm]{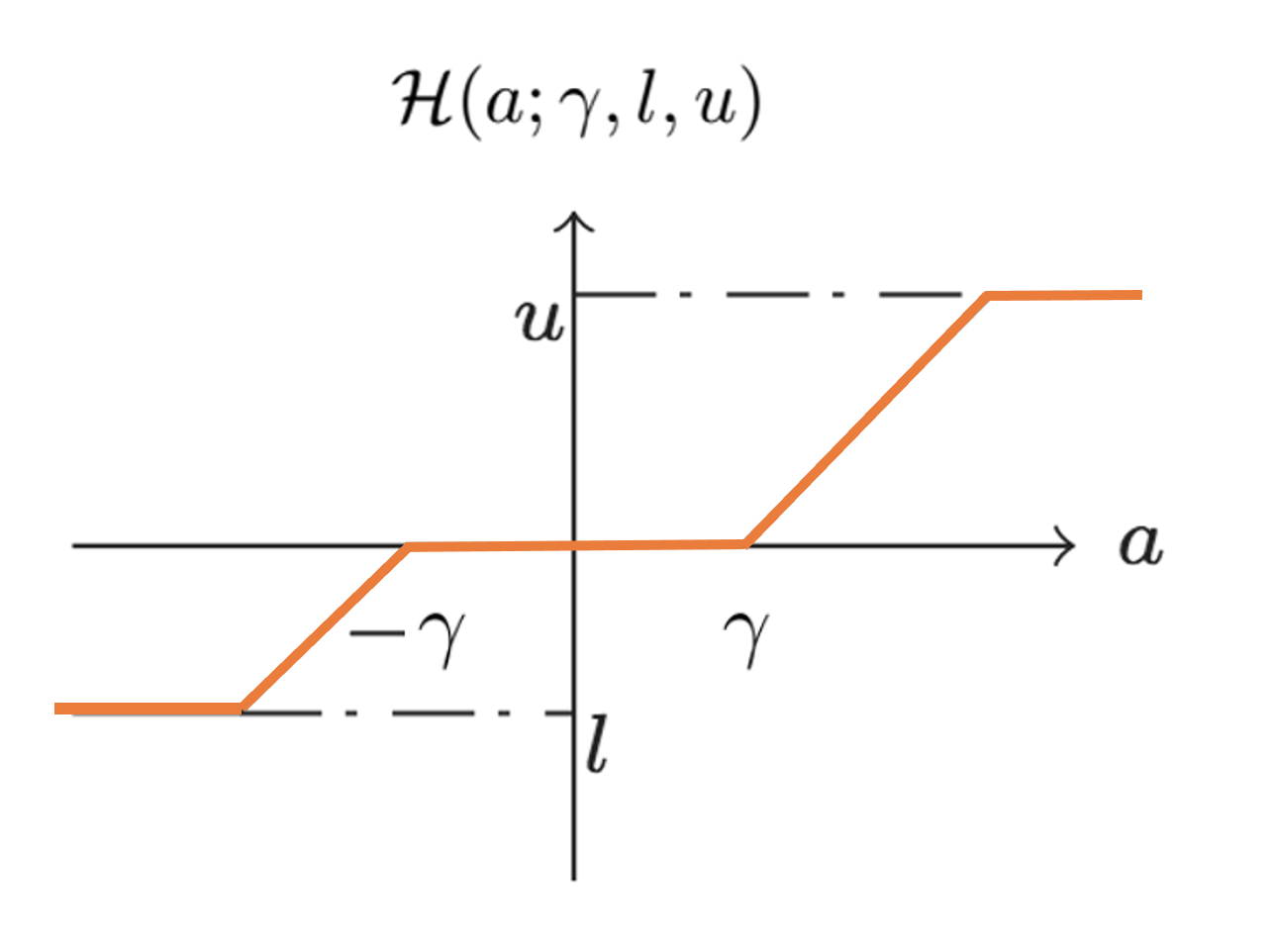}
\caption{\scriptsize{Saturated shrinkage function.}}%
\label{fig:Soft}
\end{center}
\end{figure}
$\bullet$ Also, let $\mathcal{J}(a;\gamma,l,u) :=\min_{l \leq s \leq u} \frac{1}{2}(s-a)^2 + \gamma |s|$, which can be rewritten as:
\begin{equation}\label{optimal_Threshold}
\mathcal{J}(a ; \gamma, l, u) =
\begin{cases} 
      \frac{1}{2} (u - a)^2 + \gamma u  & ,\text{if}  \ a \geq u + \gamma \\      
					
	   \gamma a - \frac{1}{2} \gamma^2  & ,\text{if}  \ \gamma < a < u + \gamma \\     
	    
			\frac{1}{2} a^2 & ,\text{if} \  |a| \leq \gamma   \\
			
      -\gamma a - \frac{1}{2} \gamma^2  & ,\text{if} \  l - \gamma < a < -\gamma \\
      
            \frac{1}{2} (l - a)^2 - \gamma l & ,\text{if} \  a \leq l - \gamma. 
\end{cases}
\end{equation}

\section{Problem Setup}
\label{sec:set-up}
\subsection{System Model}
\label{w_A}
A massive MIMO system with $n$ transmit (Tx) antennas and $m$ receive (Rx) antennas is considered in this paper. 
We herein consider a training-based transmission that consists of a coherence interval with $T = T_t + T_d$ symbols in which the channel realization is assumed to be constant.  During the first part of this coherence interval, $T_t$ symbol intervals are used as known pilot symbols, with average power, ${P}_t $.  These pilot symbols are employed for channel estimation purposes. The remaining $T_d$ symbols are dedicated to transmitting data symbols with average power, $P_d$. 
Conservation of energy implies that
\begin{equation}
\label{eq:energy conseve}
P_t  T_t + P_d T_d =   {P} \  T,
\end{equation}
where $P$ is the average total transmission power.

Letting $\nu$ denote the ratio of the total transmission energy allocated to the data, we may write
\begin{align}
P_d T_d= \nu T  P, \  \ \quad P_t  T_t =  (1- \nu)  T {P}, \ \quad \nu \in (0,1).
\end{align}
This system model is illustrated in \figref{MIMO:fig}.

The {received} signal for the \emph{data} transmission phase can be modeled as
\begin{equation}
\rv = \sqrt{\frac{P_d}{n}} \Hm \sv_0 +\vv \in \mathbb{R}^m,
\end{equation}
where the following model-assumptions hold, except if otherwise stated:
\begin{itemize}
\item $\Hm \in \mathbb{R}^{m \times n}$ is the MIMO channel matrix which has i.i.d. standard Gaussian entries (i.e., $\mathcal{N}(0,1)$).
\item $\vv \in \mathbb{R}^{m}$ is the noise vector with i.i.d. standard Gaussian entries, i.e., $\vv \sim \mathcal{N}(\mathbf{0},\Id_m)$.
\item The unknown signal vector $\sv_{0}$ is assumed to be $k$-sparse, i.e., only $k$ of its entries are sampled i.i.d. from a distribution $ p_{s_0}$ which has zero-mean and unit-variance (i.e., $\mathbb{E}[S_0^2] = \sigma_s^2=1$), and the remaining entries are zeros. 
\end{itemize}
\begin{figure}[ht]
  \centering
\includegraphics[width = 11.5cm]{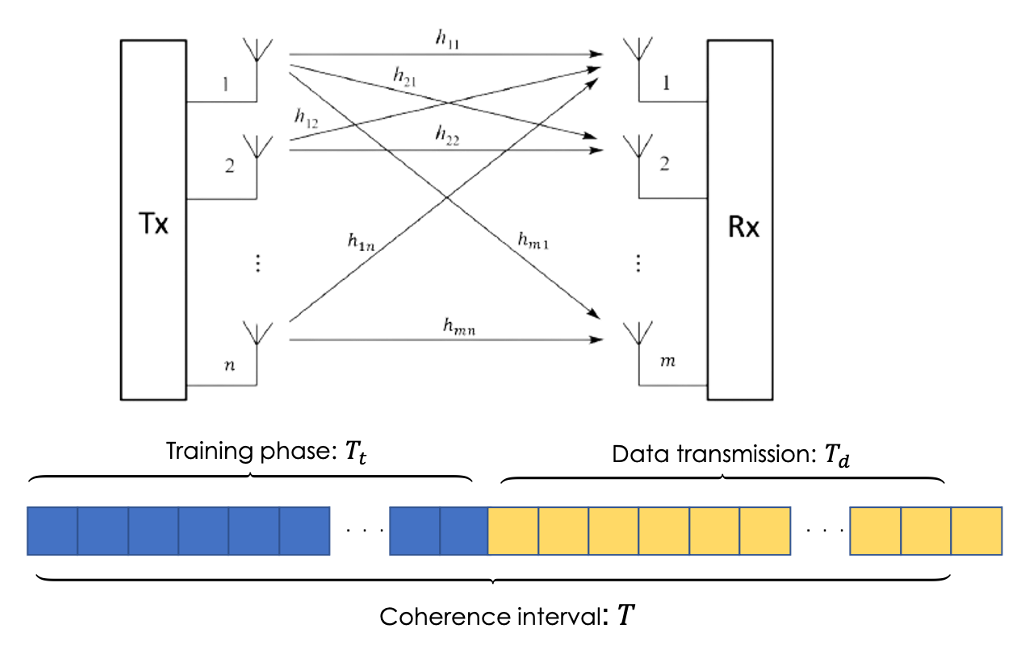}
\caption{\scriptsize{A training-assisted massive MIMO system.}}
\label{MIMO:fig}
\end{figure}
\subsection{Estimation of the Channel Matrix}
As indicated in the preceding subsection, a {training phase} in which the transmitter sends $T_{\mathsmaller{t}}$ {pilot} symbols is used to estimate the channel matrix $\Hm$,  which is unknown to the receiver.\footnote{In communications literature, this is known as the ``imperfect CSI'' case.}
In this \emph{training phase}, the received signal is represented as
\begin{equation}
\label{eq:pilot model}
\Rm_\mathsmaller{t} = \sqrt{\frac{{P}_\mathsmaller{t}}{n}}  \Hm \Sm_\mathsmaller{t} + \Vm_\mathsmaller{t},
\end{equation}
where $\Rm_\mathsmaller{t}\in \mathbb{R}^{m \times T_\mathsmaller{t}}$ is the received signal matrix, $\Sm_\mathsmaller{t}\in \mathbb{R}^{n \times T_\mathsmaller{t}}$ is the matrix of transmitted pilot symbols, and $\Vm_\mathsmaller{t}\in \mathbb{R}^{m \times T_\mathsmaller{t}}$ is a zero-mean additive white Gaussian noise (AWGN) matrix with covariance $\mathbb{E}[\Vm_{\mathsmaller{t}} \Vm_{\mathsmaller{t}}^T] =T_{\mathsmaller{t}} \Id_m$.

In this paper, we consider the linear minimum mean square error (LMMSE) estimate of the channel matrix, which can be derived based on the knowledge of $\Rm_\mathsmaller{t}$ from \eqref{eq:pilot model} as \cite{kay1993fundamentals}
\begin{align}
\label{eq:LMMSE}
\widehat{\Hm} &= \sqrt{\frac{n}{{P}_\mathsmaller{t}}}\Rm_\mathsmaller{t} \Sm_{\mathsmaller{t}}^T \left( \Sm_{\mathsmaller{t}} \Sm_{\mathsmaller{t}}^T + \frac{n}{{P}_\mathsmaller{t}} \Id_\mathsmaller{n} \right)^{-1},\nonumber \\
& = \Hm + \Omegam,
\end{align}
where $\Omegam$ is the channel estimation \textit{error matrix}, which is uncorrelated with $\widehat{\Hm}$, as per the orthogonality principle of the LMMSE estimation \cite{kay1993fundamentals, hassibi2003much}. For i.i.d. MIMO channels, it has been proven that the optimal $\Sm_\mathsmaller{t}$ that minimizes the estimation MSE satisfies \cite{hassibi2003much}
\begin{equation}
\label{eq:pilot orthogonality}
 \Sm_\mathsmaller{t} \Sm_{\mathsmaller{t}}^{T}= T_\mathsmaller{t}\Id_\mathsmaller{n}.
\end{equation}
For the above condition to hold,  it is required that 
\begin{align}\label{eq:n pilots}
T_t \geq n. 
\end{align}
Moreover, under \eqref{eq:pilot orthogonality},  it can be shown that the channel estimate $\widehat{\Hm}$ has i.i.d. zero-mean Gaussian entries with variance $\sigma_{{\widehat{H}}}^2 = 1-\sigma_{{\omega}}^2$ \cite{hassibi2003much}, where
\begin{equation}
\label{eq:error matrix variance}
\sigma_{{\omega}}^2 = \frac{1}{ 1+ \frac{{P}_\mathsmaller{t}}{n} T_\mathsmaller{t} }
\end{equation}
is the variance of each element in $\Omegam$. Furthermore, it can be shown that the entries of $\Omegam$ are i.i.d. $\mathcal{N}(0,\sigma_\omega^2)$ distributed.
Note that the training-phase energy $ {P}_\mathsmaller{t} T_\mathsmaller{t}$ controls the quality of the estimation as it appears from \eqref{eq:error matrix variance}. 
In fact, as ${P}_\mathsmaller{t} T_\mathsmaller{t} \to \infty$, $\sigma_\omega^2 \to 0$, and $\widehat{\Hm} \to \Hm$, which represents the case of perfect CSI.
\subsection{Data Detection via the Box-LASSO}
In this work, the problem in (\ref{EN_1}) is referred to as the \textit{standard} LASSO, and we instead introduce the following revised formulation of it termed the \textit{Box-LASSO}:
\begin{equation}\label{Box-EN_1} 
\widehat{\sv}=  {\rm{arg}} \ \underset{{\sv \in \mathbb{B}^n}}{\operatorname{\min}} \bigg\|   \sqrt{ \frac{P_d}{n}} \widehat{\Hm} \sv - \rv  \bigg\|_2^2 +  \gamma P_d  \| \sv \|_1,
\end{equation} 
                  $$ \text{where}, \mathbb{B} = [\ell, \mu],   \ \text{and} \ \ell \leq 0, \mu \geq 0 \in \mathbb{R}.$$
When compared to (\ref{EN_1}), we use $\Am = \sqrt{ \frac{P_d}{n}} \widehat{\Hm} $ here. This is due to the fact that $\Hm$ is not perfectly known and we only have its estimate $\widehat{\Hm}$ that was obtained by training. In addition,  note that the regularization parameter $\gamma$ is scaled by a factor of $P_d $. This is made such that the two terms grow with the same pace.

The only difference between (\ref{Box-EN_1}) and (\ref{EN_1}), is that (\ref{Box-EN_1}) now has a ``box-constraint''. However, as we will show later, in cases where the elements of $\sv_0$ are bounded or approximately so, this minor modification ensures a considerable gain in performance. Therefore, the Box-LASSO can be used to recover simultaneously structured signals \cite{oymak2015simultaneously}, for example, signals that are both bounded and sparse.
Such signals appear in various applications including machine learning \cite{bishop2006pattern}, wireless communications \cite{jeganathan2009space}, image processing \cite{ting2009sparse}, and so on.
Although the Box-LASSO is not as well-known as the standard LASSO, there are a few references where it has been applied \cite{gaines2018algorithms,james2012constrained, stojnic2010recovery}. Of particular interest is the application of the Box-LASSO in spatially modulated MIMO systems such as GSSK modulated signals which we will discuss in Section~\ref{sec:GSSK}.

\subsection{Technical Assumptions}
\label{Sec:assumptions}
In this work, we require the following technical assumptions to hold.
\begin{assumption}\normalfont
The analysis requires that the system dimensions ($m$, $n$ and $k$) grow simultaneously large (i.e., $m,n,k \to \infty$) at fixed ratios:
$$\frac{m}{n} \to \eta \in (0,\infty),$$ and $$\frac{k}{n} \to \kappa \in (0,1).$$
\end{assumption}
\begin{assumption}\normalfont
We assume that the normalized coherence interval, normalized number of pilot symbols and normalized
number data symbols are fixed and given as
$$\frac{T}{n} \to \tau \in (1,\infty),$$ 
$$ \frac{T_{\mathsmaller{t}}}{n}  \to \tau_{\mathsmaller{t}} \in [1,\infty),$$ and $$\frac{T_{\mathsmaller{d}}}{n} \to \tau_{\mathsmaller{d}},$$
respectively.  
\end{assumption}
Under these assumptions, the energy conservation in \eqref{eq:energy conseve} becomes
\begin{equation}
P_t  \tau_t + P_d \tau_d =   {P} \  \tau,
\end{equation}
and the channel estimation error variance in \eqref{eq:error matrix variance} reads
\begin{equation}
\sigma_{{\omega}}^2 = \frac{1}{ 1+ P_t\tau_\mathsmaller{t} }.
\end{equation}
\subsection{Figures of Merit}
We measure the performance of the Box-LASSO using following figures of merit:\\
\textbf{Mean Square Error}: A widely used figure of merit is the \emph{estimation} mean square error (MSE), that measures the divergence of the estimate $\widehat{\sv}$ from the original signal $\sv_0$. Formally, it is defined as 
\begin{align}
{\rm{MSE}} := \frac{1}{n}\| \widehat{\sv} - \sv_0 \|_2^2.
\end{align}
\textbf{Support Recovery}: In sparse recovery problems, a natural performance measure that is employed in numerous applications is \emph{support recovery}, that can be defined as determining whether an element of $\sv_0$ is non-zero (i.e., on the support), or if it is zero (i.e., off the support). The decision, based on the Box-LASSO solution $\widehat{\sv}$, proceeds as follows: if $| \widehat{s}_{j}| \geq \upzeta$, then, $\widehat{s}_j$ is on the support, where $\upzeta > 0$ is a user-defined hard threshold on the elements of $\widehat{\sv}$. Otherwise, $\widehat{s}_j$ is off the support. 

Essentially, we have two measures: the probability of successful \emph{on}-support recovery denoted by $\Uppsi_{\upzeta}^{\text{on}}(\widehat{\sv})$, and the probability of successful \emph{off}-support recovery, i.e., $\Uppsi_{\upzeta}^{\text{off}}(\widehat{\sv})$. Formally, these quantities are defined, respectively, as
\begin{subequations}\label{supp}
\begin{align}
\Uppsi_{\upzeta}^{\text{on}}(\widehat{\sv}) := \frac{1}{k} \sum_{j \in \mathcal{T}(\sv_0)} \mathbbm{1}_{\{| \widehat{s}_{j}| \geq \upzeta \}},\\
\Uppsi_{\upzeta}^{\text{off}}(\widehat{\sv}) := \frac{1}{n-k} \sum_{j \notin \mathcal{T}(\sv_0)} \mathbbm{1}_{\{| \widehat{s}_{j}| \leq \upzeta \}},
\end{align}
\end{subequations}
\noindent
where $\mathbbm{1}_{\{ . \}}$ is the indicator function, and $\mathcal{T}(\sv_0)$ is the support of $\sv_0$, i.e., the set of all non-zero elements of $\sv_0$.
\section{Main Results}
\label{sec:Results}
\subsection{Performance Characterization}
In this subsection,  we summarize the main theoretical results regarding the asymptotic performance of the Box-LASSO.
The first theorem gives the sharp performance analysis of the MSE of the Box-LASSO.
\begin{theorem}\label{EN_mse}\normalfont
Let $\widehat{\sv}$ be a minimizer of the Box-LASSO problem in (\ref{Box-EN_1}), where $\Hm, \vv$ and $\sv_0$ satisfy the model assumptions in Section \ref{w_A}.
In addition, assume that the optimization problem: $\max_{\beta > 0}  \min_{\lambda>0} \ \mathcal{G} (\beta, \lambda)$ has a unique optimizer $(\beta_\star, \lambda_\star)$, where
\begin{align}\label{eq:scalar1}
 \mathcal{G} (\beta, \lambda)&:= \frac{\beta \sqrt{\eta}}{2 \lambda }+ \frac{\beta \lambda \sqrt{\eta}}{2 } \left( 1+ \kappa   {P_d  \sigma_{\omega}^2}  \right) -\frac{\beta^2}{4}   - \frac{\beta}{2 \lambda \sqrt{\eta}} \nonumber \\ 
& +\beta \lambda \sqrt{\eta} P_d  \sigma_{\widehat{H}}^2 \ \mathbb{E} \left[\mathcal{J} \left(S_{0}+\frac{Z}{\lambda \sqrt{\eta P_d  \sigma_{\widehat{H}}^2 }}; \frac{\gamma}{\beta \lambda \sqrt{\eta}  \sigma_{\widehat{H}}^2 }, \ell,\mu \right)   \right],
\end{align}
and the expectation is taken over $S_0 \sim p_{s_0}$ and $Z \sim \mathcal{N}(0,1)$.

Then, under Assumption~1 and Assumption~2,  and for a fixed $\gamma> 0$, it holds:
\begin{align}\label{eq:MSE}
&\underset{n\to\infty}{\rm{plim}} \ {\rm{MSE}}=  \frac{1}{P_d  \sigma_{\widehat{H}}^2} \left(  \frac{1}{\lambda_\star^2} -1 - \kappa   P_d  \sigma_{\omega}^2  \right).
\end{align}
\end{theorem}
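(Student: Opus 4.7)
The plan is to apply the CGMT framework to the Box-LASSO (\ref{Box-EN_1}) in the imperfect-CSI setting, following the template of \cite{thrampoulidis2016precise,atitallah2017box}. First I would make the change of variable $\wv := \sv - \sv_0$ and use $\rv = \sqrt{P_d/n}\,(\widehat{\Hm}-\Omegam)\sv_0 + \vv$ to rewrite the residual as
\begin{equation}
\sqrt{P_d/n}\,\widehat{\Hm}\sv - \rv = \sqrt{P_d/n}\,\widehat{\Hm}\wv + \tilde{\vv}, \qquad \tilde{\vv} := \sqrt{P_d/n}\,\Omegam\sv_0 - \vv.
\end{equation}
By the LMMSE orthogonality principle, $\Omegam$ is independent of $\widehat{\Hm}$, so conditionally on $\sv_0$ the composite noise $\tilde{\vv}$ is i.i.d.\ Gaussian with per-entry variance $\tfrac{P_d}{n}\sigma_\omega^2\|\sv_0\|_2^2 + 1$, which by the weak law concentrates on $1 + \kappa P_d\sigma_\omega^2$---precisely the factor appearing in $\mathcal{G}(\beta,\lambda)$.

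Next I would linearize the squared data-fit via the identity $\|\xv\|_2^2 = \max_{\uv}\{2\uv^T\xv - \|\uv\|_2^2\}$, turning (\ref{Box-EN_1}) into a convex-concave min-max problem with the bilinear term $2\sqrt{P_d/n}\,\uv^T\widehat{\Hm}\wv$. Since $\widehat{\Hm}$ has i.i.d.\ $\mathcal{N}(0,\sigma_{\widehat{H}}^2)$ entries, CGMT replaces this bilinear form by
\begin{equation}
\sqrt{P_d/n}\,\sigma_{\widehat{H}}\bigl(\|\wv\|_2\,\hv^T\uv + \|\uv\|_2\,\gv^T\wv\bigr),
\end{equation}
for independent standard Gaussians $\gv \in \mathbb{R}^n$ and $\hv \in \mathbb{R}^m$, giving the auxiliary optimization (AO). Parametrizing $\|\uv\|_2 = \beta\sqrt{m}$ and maximizing over the direction of $\uv$ collapses the terms linear in $\uv$ into a Euclidean norm of the form $\|\sqrt{P_d/n}\,\sigma_{\widehat{H}}\|\wv\|_2\hv + \tilde{\vv}\|_2$; a second linearization via the square-root trick introduces the scalar $\lambda$ conjugate to this norm. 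The remaining inner minimization over $\wv$ decouples entry-wise because the $\ell_1$ penalty and the box constraint $\sv_0+\wv\in\mathbb{B}^n$ both act coordinate-wise. Each scalar subproblem evaluates to the Moreau value $\mathcal{J}(\,\cdot\,;\gamma',\ell,\mu)$ of (\ref{optimal_Threshold}), and averaging over the empirical distribution of $\sv_0$ (weakly converging to $(1-\kappa)\delta_0 + \kappa\,p_{s_0}$) together with the independent Gaussian shift from $\gv$ reproduces the expectation in (\ref{eq:scalar1}); collecting all deterministic pieces recovers $\mathcal{G}(\beta,\lambda)$ exactly.

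Finally I would extract the MSE from the saddle point. Under the assumed uniqueness of $(\beta_\star,\lambda_\star)$, the CGMT transfer principle guarantees that $\|\wv\|_2^2/n = {\rm MSE}$ concentrates on its AO counterpart, and first-order stationarity of $\mathcal{G}$ in $\lambda$ at $\lambda_\star$ yields a closed-form expression for this limit that rearranges into (\ref{eq:MSE}). The main obstacle will be the rigorous execution of CGMT under imperfect CSI: verifying strict convex-concavity of the finite-$n$ AO (which is what promotes convergence of optimum values into convergence of optimizers), compactifying the feasible set so CGMT formally applies and then removing the auxiliary bounds \emph{a posteriori}, and establishing joint concentration of the composite noise $\tilde{\vv}$ with the auxiliary Gaussian vectors $\gv,\hv$. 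The uniqueness of $(\beta_\star,\lambda_\star)$ assumed in the hypothesis is precisely the condition that unlocks the concentration of the MSE around its deterministic limit.
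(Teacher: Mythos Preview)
Your proposal is correct and follows essentially the same CGMT route as the paper: the same error-vector change of variable, the same Fenchel linearization of the quadratic loss, the same identification of the composite Gaussian noise $\tilde{\vv}$ with asymptotic per-entry variance $1+\kappa P_d\sigma_\omega^2$, the same square-root variational trick to introduce $\lambda$, and the same coordinate-wise reduction to $\mathcal{J}(\cdot;\cdot,\ell,\mu)$. The only cosmetic difference is in how the MSE is read off: the paper exploits directly that the optimal $\lambda$ in the identity $\sqrt{x}=\min_{\lambda>0}\tfrac{1}{2\lambda}+\tfrac{\lambda x}{2}$ equals $1/\sqrt{x}$ with $x=\tfrac{P_d\sigma_{\widehat H}^2}{n}\|\ev\|_2^2+1+\tfrac{P_d\sigma_\omega^2}{n}\|\sv_0\|_2^2$, so that $\hat\lambda_n\to\lambda_\star$ immediately yields \eqref{eq:MSE}, whereas you phrase the same relation as first-order stationarity of $\mathcal{G}$ in $\lambda$; these are the same computation.
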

\begin{proof}
The proof is relegated to the appendix.
\end{proof}
\begin{remark}[Finding optimal scalars]\normalfont
The scalars $\beta_\star$ and $\lambda_\star$ can be numerically evaluated by solving the first-order optimality conditions, i.e., 
\begin{align}
\nabla_{(\beta, \lambda)}  \mathcal{G}(\beta, \lambda )=\bf {0}.
\end{align}
\end{remark}
\begin{remark}[Roles of $\lambda_\star$ and $\beta_\star$]\normalfont 
\label{Remark:Res}
From Theorem~\ref{EN_mse} above,  we can see that the optimal scalar $\lambda_\star$  is related to the asymptotic MSE.
However, the role of $\beta_\star$ is not evident from the above theorem. 

Based on our derivations, it turns out that $\beta_\star$ is related to another performance metric called the \emph{residual} \cite{hayakawa2020noise} between $\rv$ and the estimate $\widehat{\rv}$ which is also called the \textit{prediction error}. Formally, it is defined as
\begin{align}
{\rm{\mathcal{R}}} : =\frac{1}{n} \bigg\|\underbrace{ \sqrt{\frac{P_d }{n}} \widehat{\Hm}\widehat{\sv}}_{:=\widehat{\rv}} - \rv \bigg\|_2^2.
\end{align}
Then, as we will prove in the Appendix, under the same assumptions in Theorem~\ref{EN_mse}, it holds 
\begin{align}\label{eq:res}
&\underset{n\to\infty}{\rm{plim}} \ \mathcal{R}=  \frac{1}{4} \beta_\star^2.
\end{align}
The above expression clearly shows the role $\beta_\star$ in predicting the asymptotic value of the residual, and 
\figref{R_Fig} illustrates its high accuracy.

The residual metric is less relevant to the MIMO application considered in this paper, but it is of great interest in other practical data science problems, where you only have an access to the measurement vector $\rv$, and not the true vector $\sv_0$.
\end{remark}
\begin{remark}[Optimal MSE regularizer]\normalfont
Theorem~1 can be used to find the optimal regularizer $\gamma_\star^{\mathrm{MSE}}$ that minimizes the MSE. See for example \figref{Fig:mse/res}a.
Particularly,  $\gamma_\star^{\mathrm{MSE}}$ can be found as follows
\begin{align}\label{opt_Reg_MSE}
\gamma_\star^{\mathrm{MSE}} = &\argmin_{\gamma>0} \frac{1}{\lambda_\star},\nonumber \\ 
= &\argmax_{\gamma>0} {\lambda_\star}.
\end{align}
The above expression can be easily proven, by noting that $\gamma$ appears in the MSE expression of \eqref{eq:MSE} only implicitly through $\lambda_\star$. 
\end{remark}

In the next theorem, we sharply characterizes the support recovery metrics introduced earlier in (\ref{supp}).
\begin{theorem}\label{EN_on/off}\normalfont
Under the same settings of Theorem \ref{EN_mse}, for any fixed $\upzeta>0$,
and under Assumption~1 and Assumption~2, the on-support and off-support probabilities converge as
\begin{equation}\label{eq:on_supp}
\underset{n\to\infty}{\rm{plim}}\ \Uppsi_{\upzeta}^{\rm{on}}(\widehat{\sv}) = \mathbb{P} \biggl[\biggl | \mathcal{H}\biggl( S_{0}+\frac{Z}{\lambda_\star \sqrt{\eta P_d  \sigma_{\widehat{H}}^2 }}; \frac{\gamma }{\beta_\star \lambda_\star \sqrt{\eta}   \sigma_{\widehat{H}}^2 }, \ell,\mu  \biggr)  \biggr |   \geq \upzeta \biggr], 
\end{equation}
and 
\begin{equation}\label{eq:off_supp}
\underset{n\to\infty}{\rm{plim}} \ \Uppsi_{\upzeta}^{\rm{off}}(\widehat{\sv}) = \mathbb{P} \biggl[ \biggl| \mathcal{H}\biggl( \frac{Z}{\lambda_\star \sqrt{\eta P_d  \sigma_{\widehat{H}}^2 }}; \frac{\gamma }{\beta_\star \lambda_\star \sqrt{\eta}   \sigma_{\widehat{H}}^2 }, \ell,\mu  \biggr)  \biggr|   \leq \upzeta \biggr],
\end{equation}
respectively.
\end{theorem}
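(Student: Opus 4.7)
The plan is to leverage the CGMT-based scalarization that underpins the proof of Theorem~\ref{EN_mse} and to extend it from the quadratic test function (which yields the MSE) to the indicator-type functionals appearing in \eqref{supp}. From the CGMT analysis of the Box-LASSO problem \eqref{Box-EN_1}, the Auxiliary Optimization collapses to the scalar saddle point $\mathcal{G}(\beta,\lambda)$ and its minimizer inherits an explicit coordinate-wise description: up to vanishing error terms,
\begin{align}
\widehat{s}_j \;=\; \mathcal{H}\!\left( s_{0,j}+ \frac{Z_j}{\lambda_\star \sqrt{\eta P_d \sigma_{\widehat{H}}^2}};\, \frac{\gamma}{\beta_\star \lambda_\star \sqrt{\eta}\,\sigma_{\widehat{H}}^2},\, \ell,\, \mu \right),
\end{align}
where the $Z_j$ are i.i.d.\ $\mathcal{N}(0,1)$ independent of $\sv_0$ and $(\beta_\star,\lambda_\star)$ is the unique saddle point of $\mathcal{G}$. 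Granted this representation, the remaining task is to translate the vector-level limit into the two conditional averages in \eqref{supp}.

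First I would split the sums in \eqref{supp} according to the true support $\mathcal{T}(\sv_0)$. For $j \notin \mathcal{T}(\sv_0)$ one has $s_{0,j}=0$, so the argument of $\mathcal{H}$ reduces to $Z_j/(\lambda_\star\sqrt{\eta P_d \sigma_{\widehat{H}}^2})$, matching the right-hand side of \eqref{eq:off_supp}; for $j \in \mathcal{T}(\sv_0)$ the entries $s_{0,j}$ are i.i.d.\ draws from $p_{s_0}$ by Section~\ref{w_A}, matching the right-hand side of \eqref{eq:on_supp}. In each case the summands are, conditionally, i.i.d.\ copies of a uniformly bounded random variable, so an $L^2$ law of large numbers for the conditional averages converges them in probability to the corresponding population expectation, provided we can legitimately replace the empirical functional with its CGMT limit.

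The technical point that has to be handled carefully is the non-Lipschitz nature of $\mathbbm{1}_{\{|\cdot|\geq \upzeta\}}$: the CGMT transfer lemma delivers limits for averages of locally Lipschitz test functions, not discontinuous ones. The standard remedy is a sandwich: for each $\varepsilon>0$ pick Lipschitz surrogates $f^{\pm}_\varepsilon$ with $f^{-}_\varepsilon \leq \mathbbm{1}_{\{|\cdot|\geq \upzeta\}} \leq f^{+}_\varepsilon$ and $f^{+}_\varepsilon-f^{-}_\varepsilon$ supported in an $\varepsilon$-neighborhood of $\{\pm\upzeta\}$. Applying the CGMT conclusion to $f^{\pm}_\varepsilon$ on the two index sets $\mathcal{T}(\sv_0)$ and its complement, then letting $\varepsilon\downarrow 0$ and invoking dominated convergence, produces the claimed probability limits \eqref{eq:on_supp} and \eqref{eq:off_supp}. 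The off-support case is identical after removing the $S_0$ term.

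The main obstacle is precisely the atom-avoidance that the sandwich requires: the saturated shrinkage $\mathcal{H}$ places point masses at $0$, $\ell$, and $\mu$, so the above argument is clean at every threshold $\upzeta\notin\{0,|\ell|,\mu\}$; at the borderline thresholds one must argue with one-sided sandwiches or a direct local-density computation, but these are measure-zero exceptions in the parameter $\upzeta$ and do not affect the statement. Beyond this subtlety, the proof is a routine extension of the CGMT machinery already assembled for Theorem~\ref{EN_mse} and requires no new scalar optimization input.
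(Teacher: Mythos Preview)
Your proposal is correct and follows the same CGMT-based route the paper points to: the paper does not actually spell out a proof of Theorem~\ref{EN_on/off} but simply states that it ``to a great extent follows the proof of Theorem~\ref{EN_mse}'' and cites \cite{thrampoulidis2018symbol,alrashdi2020precise} for the analogous arguments. Your sketch---extracting the coordinate-wise saturated-shrinkage representation from the AO analysis, splitting over the support, and handling the discontinuous indicator via a Lipschitz sandwich before applying the CGMT transfer---is exactly the standard mechanism those references implement, so you are filling in precisely what the paper leaves implicit.
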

\begin{proof}
The proof of Theorem~\ref{EN_on/off} to a great extent follows the proof of Theorem \ref{EN_mse}, but is omitted for briefness of the presentation. See \cite{thrampoulidis2018symbol,alrashdi2020precise} for similar proof techniques.
\end{proof}
\begin{remark}[Regularizer's strength]\normalfont
%
It is easy to see from Theorem~\ref{EN_on/off} that when $\gamma$ grows larger, $\Uppsi_{\upzeta}^{\rm{off}}(\widehat{\sv})$ converges to $1$ whereas $ \Uppsi_{\upzeta}^{\rm{on}}(\widehat{\sv})$ converges to $0$. When $\gamma$ approaches $0$, opposite behavior is exhibited.
This is expected since large values of $\gamma$ emphasize the $\ell_1$-norm term, resulting in a sparser solution. This is illustrated clearly in \figref{fig:on/off}.
\end{remark}
\begin{remark}[Optimal regularizer]\normalfont
A sensible measure of performance to trade-off between the on-support and off-support recovery probability is
\begin{align}
\Uppsi_{\upzeta}(\widehat{\sv}) :=\theta  \ \Uppsi_{\upzeta}^{\rm{on}}(\widehat{\sv})+ (1-\theta) \ \Uppsi_{\upzeta}^{\rm{off}}(\widehat{\sv}), \ \text{for} \ \theta \in [0, 1].
\end{align}
The behavior of this metric as a function of $\gamma$ is sharply characterized by Theorem~\ref{EN_on/off}. As a result, Theorem~\ref{EN_on/off} can also be utilized to determine the optimal value of $\gamma$ which optimizes $\Uppsi_{\upzeta}(\widehat{\sv})$.
\end{remark}
\subsection{Numerical Illustrations}
%
For the sake of illustration, we will simply look at the instance when $\sv_0$ has elements that are only allowed to take one of two possible values: $0$, or $E > 0$. 
For a normalized sparsity level $\kappa \in (0, 1)$, such prior knowledge is typically modeled using a sparse-Bernoulli
distribution on the elements of $\sv_0$, i.e., $S_0 \sim (1-\kappa)\delta_0 + \kappa \delta_E $.
This model is frequently seen in MIMO communication systems using generalized space-shift keying (GSSK) modulation \cite{jeganathan2008generalized}; we go over the role of the Box-LASSO in such systems in Section~\ref{sec:GSSK}.
In this case, setting $\ell=0$, and $\mu=E$ as the box-constraint values is a natural choice.

Therefore, in our numerical simulations, we assume that $\sv_0$ has elements that are sampled i.i.d. from a sparse-Bernoulli distribution with $\mathbb{P}[S_0 = 0] =0.8$, $\mathbb{P}[S_0 = 1] =0.2$ (i.e., $\kappa=0.2$) and $E =1$; to satisfy the unit-variance assumption. 

\figref{Fig:mse/res} shows the close match between Theorem~\ref{EN_mse} asymptotic prediction of the MSE and residual of the Box-LASSO and the Monte Carlo (MC) simulations. For the simulations, we used $\eta = 1.5,n = 100$, $T = 500,T_t =n, \nu = 0.5,$ and $P= 15 \ {\rm{dB}}$. These results are averaged over $100$ independent realizations
of $\sv_0, \Hm$ and $\vv$.
From this figure, it can be seen that the Box-LASSO outperforms the standard LASSO.
We can also see in Fig. \ref{mse_Fig} that as the regularization parameter $\gamma$ is varied, a pronounced minimum for a certain $\gamma> 0$ is observed. 
\begin{figure*}
\begin{subfigure}{.5\textwidth}
  \centering
  \includegraphics[width=8cm, height =6.1cm]{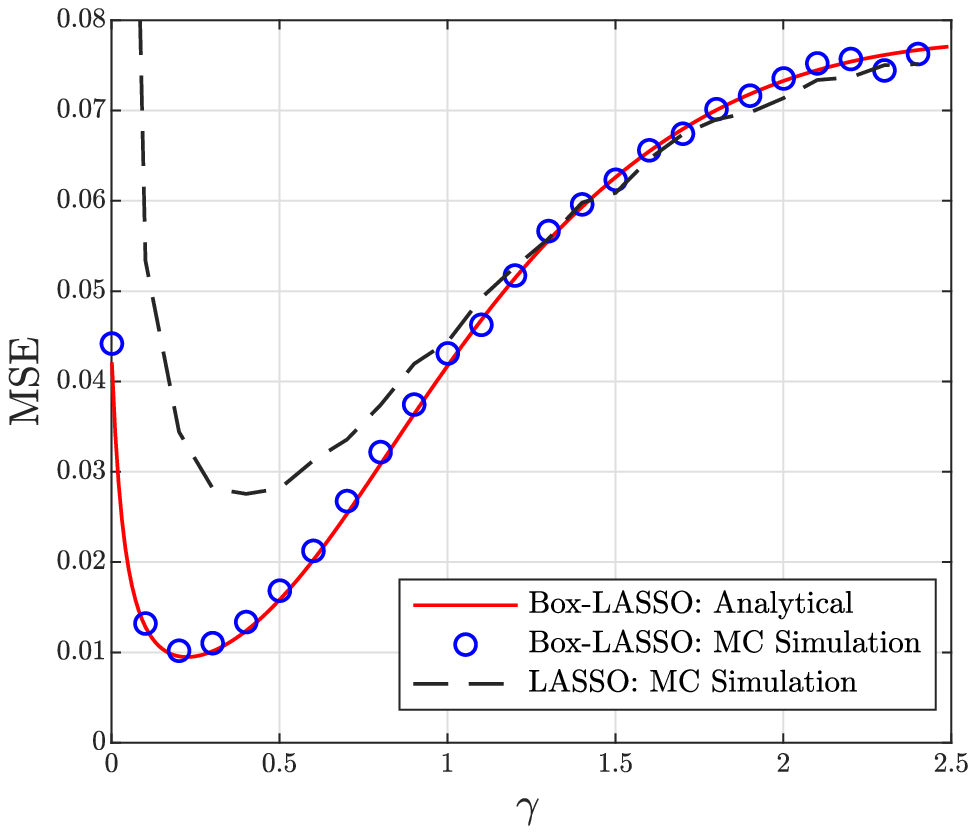}%
\caption{\scriptsize{MSE Performance.}}%
\label{mse_Fig}
\end{subfigure}
\begin{subfigure}{.5\textwidth}
\centering
\includegraphics[width=8cm, height =6.1cm]{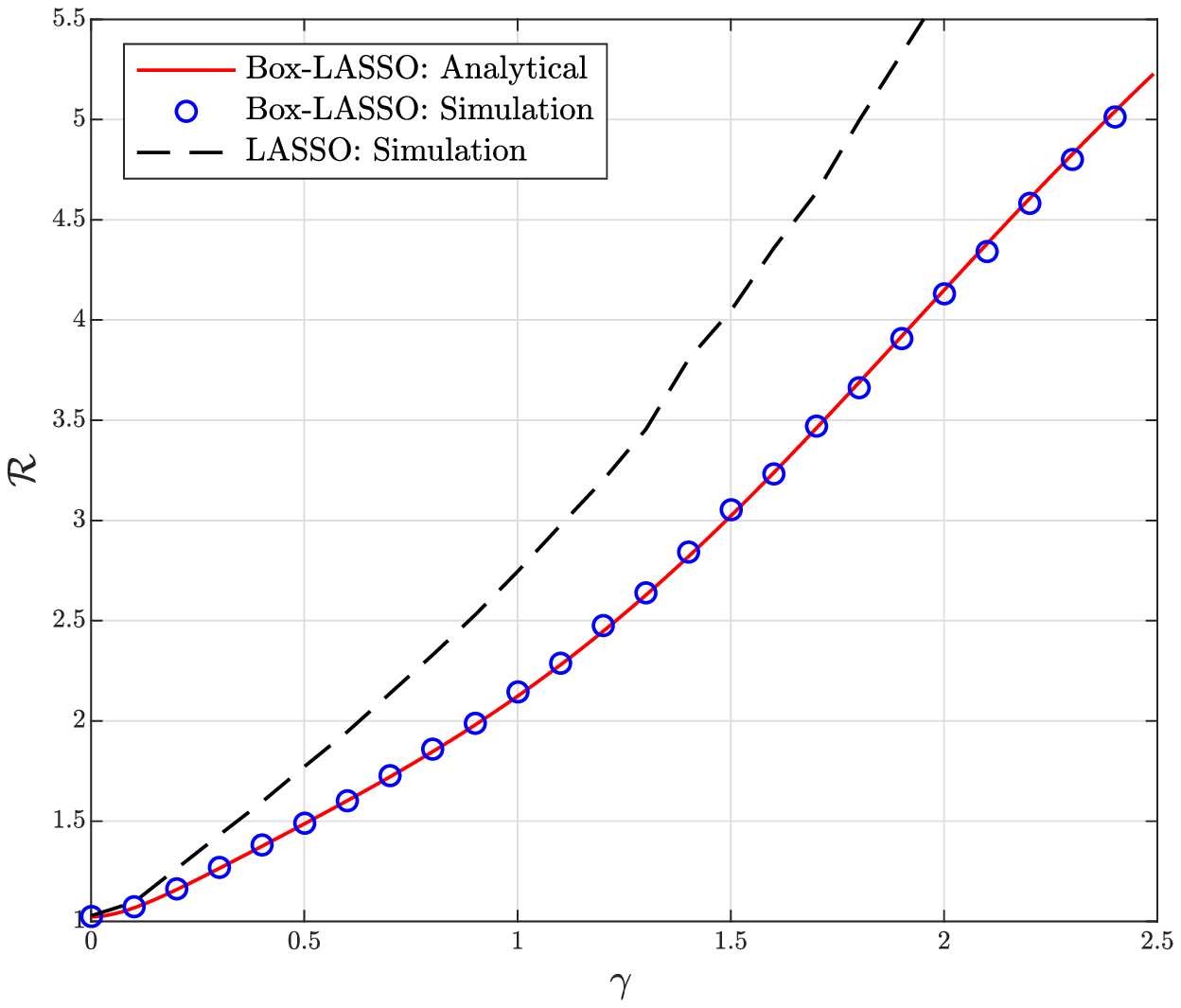}
\caption{\scriptsize{Residual Performance.}}
\label{R_Fig}
\end{subfigure}
\caption{\scriptsize{MSE/Residual performance of the Box-LASSO and the standard LASSO vs. the regularizer. 
The analytical prediction is based on Theorem \ref{EN_mse} with $p_{s_0} = (1-\kappa) \ \delta_0 + \kappa \ \delta_E$. We used $\kappa =0.2,\eta = 1.5,n = 128$, $T = 500,T_t =n, \nu = 0.5, E=1,$ and $P= 15 \ {\rm{dB}}$.}}
\label{Fig:mse/res}
\end{figure*}

The analytical expressions of Theorem~\ref{EN_on/off} for the support recovery probabilities are compared to the MC simulations and displayed in \figref{fig:on/off} with the same simulation settings as in the preceding figure. 
Once again, this figure demonstrates the great accuracy of the provided theoretical expressions.
\begin{figure*}
\begin{subfigure}{.5\textwidth}
  \centering
\includegraphics[width=8cm, height =6.1cm]{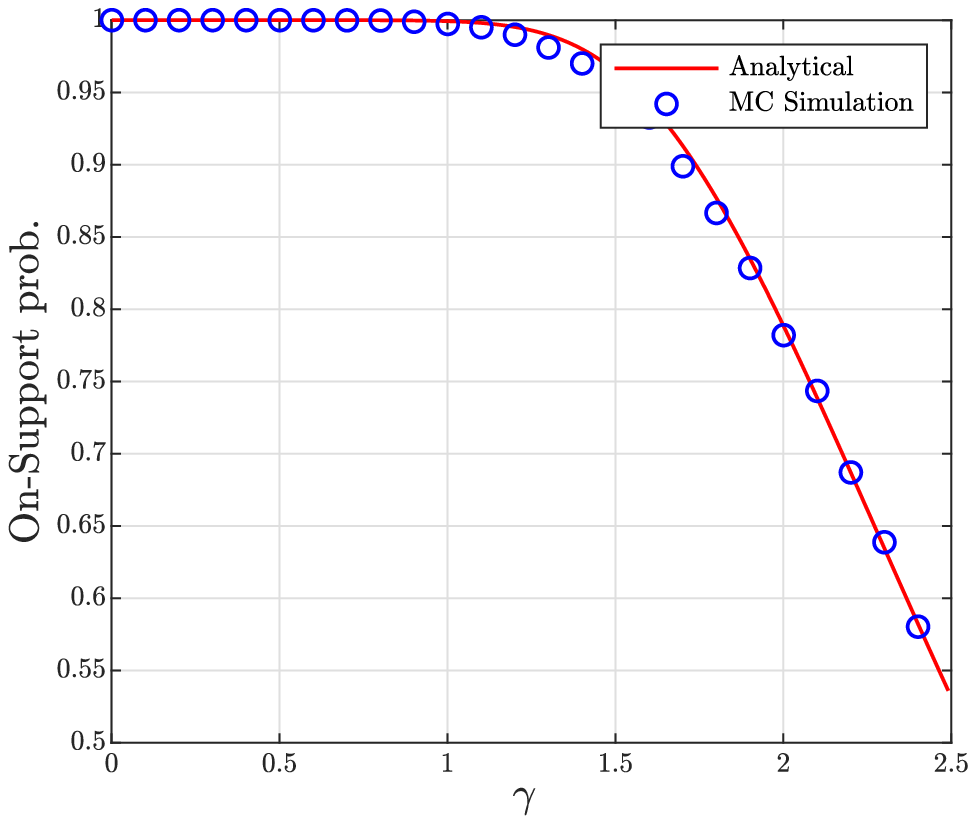}
\caption{\scriptsize{On-Support Probability.}}
\end{subfigure}
\begin{subfigure}{.5\textwidth}
\centering
\includegraphics[width=8cm, height =6.1cm]{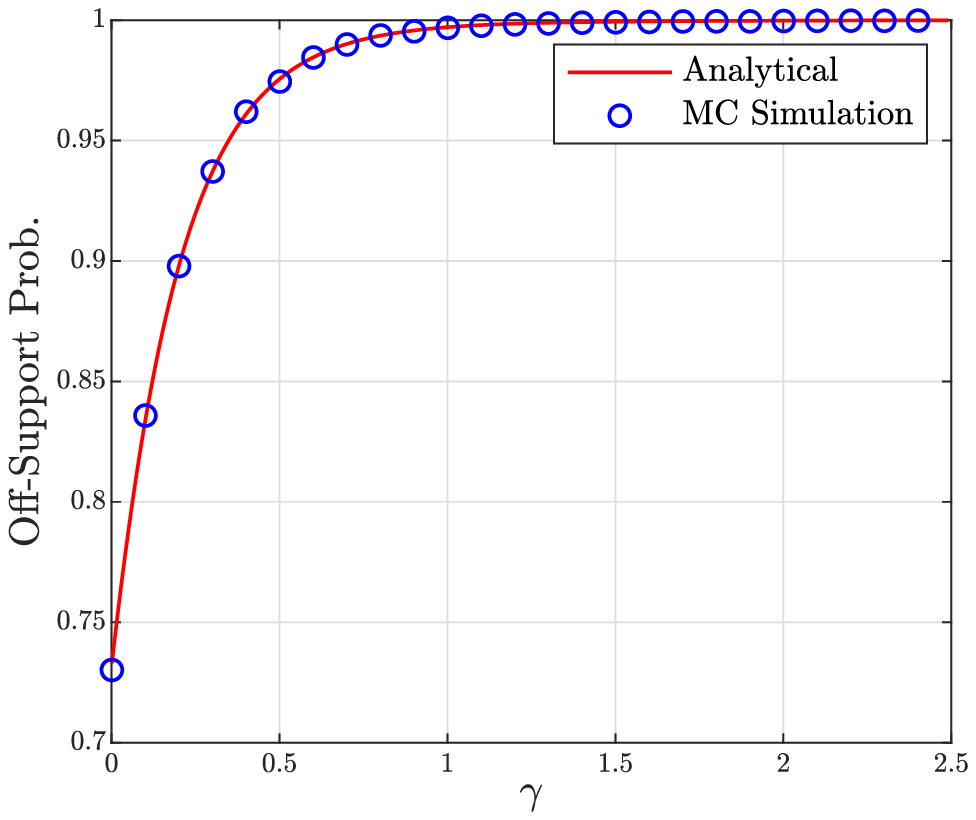}
\caption{\scriptsize{Off-Support Probability.}}
\end{subfigure}
\caption{\scriptsize{Probability of successful support recovery of the Box-LASSO. 
The analytical prediction is based on Theorem \ref{EN_on/off} with $p_{s_0} = (1-\kappa) \ \delta_0 + \kappa \ \delta_E$. We used $\kappa =0.2,\eta = 1.5,n = 128$, $T = 500,T_t =n, \nu = 0.5, E=1,$ and $P= 15 \ {\rm{dB}}$.}}%
\label{fig:on/off}
\end{figure*}
\begin{remark}\normalfont
For the previously mentioned sparse-Bernoulli distributed signal, 
the support recovery probabilities in (\ref{eq:on_supp}), and (\ref{eq:off_supp}) simplify to  
\begin{align}\label{eq:on-Bernoulli}
\underset{{n\to\infty} }{\rm{plim}} \ \Uppsi_{\upzeta}^{\text{on}}(\widehat{\sv}) = \ &Q \left( (\upzeta - E) \lambda_\star \sqrt{\eta P_d \sigma_{\widehat H}^2} +  \frac{\gamma \sqrt{P_d}}{\beta_\star  \sigma_{\widehat H}}   \right),
\end{align}
and 
\begin{equation}\label{eq:off-Bernoulli}
\underset{{n\to\infty} }{\rm{plim}} \ \Uppsi_{\upzeta}^{\text{off}}(\widehat{\sv}) = \ 1 - Q \left(  \upzeta  \lambda_\star \sqrt{\eta P_d \sigma_{\widehat H}^2}  + \frac{\gamma \sqrt{P_d}}{\beta_\star  \sigma_{\widehat H}}   \right).
\end{equation}
These expressions have been used in the numerical simulations above with $E =1$ therein.
\end{remark}
\begin{remark}[Unbounded elements]\normalfont
%
In instances when the elements of the original signal are \emph{unbounded} but take values in a specific range with high probability, the Box-LASSO can be a valuable decoder as well. 
To demonstrate this, let us take the example in which the elements of $\sv_0$ are i.i.d. sparse-Gaussian distributed, i.e., $S_0 \sim (1-\kappa)\delta_0 +\kappa \ \mathcal{N} (0, \sigma_s^2)$. 
\figref{fig:MSE_Gauss} illustrates a case in which the Box-LASSO outperforms the standard LASSO.
We used $\ell = - \sigma_s$ and $\mu = \sigma_s$ as the box-constraints in this example.
\end{remark}
\begin{figure}
  \centering
\includegraphics[width=8cm, height =6.1cm]{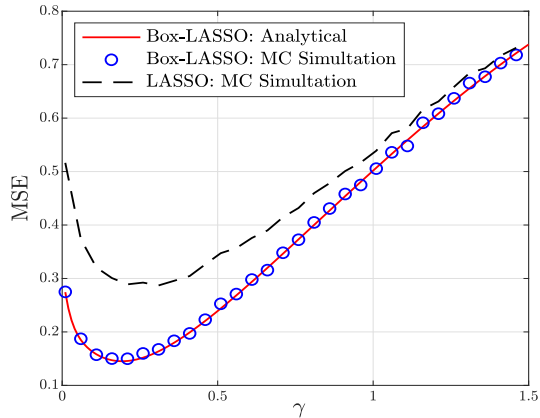}
\caption{\scriptsize{MSE of the Box-LASSO and standard LASSO. The analytical prediction is based on Theorem \ref{EN_mse} with a sparse-Gaussian $\sv_0$ signal.  We used $\kappa =0.1,\eta = 1.2,n = 400$, $T = 1000,T_t =456, \nu =0.5,\sigma_s^2=1,$ and $ P= 10 \ {\rm{dB}}$.}}%
\label{fig:MSE_Gauss}
\end{figure}
\begin{remark}[{Universality}]\normalfont
%
%
Even without the Gaussianity assumption on the elements of the channel matrix $\Hm$, our extensive simulations strongly indicate that the statements of Theorem~\ref{EN_mse} and Theorem~\ref{EN_on/off} are still valid. This is especially helpful in MIMO applications where the channel matrix elements can be represented beyond the typical fading model (Gaussian), such as in the more involved fading models, e.g., Weibull and Nakagami \cite{simon2005digital}. The same asymptotic statements appear to hold regardless of whether the distribution of $\Hm$ is Gaussian, Binary, or Laplacian (as illustrated in \figref{fig:univ}). Rigorous proofs, known as \textit{universality results}, in \cite{abbasi2019universality, oymak2018universality, montanari2017universality} support such a claim.
\end{remark}
\begin{figure}
  \centering
\includegraphics[width=7.3cm, height =6.1cm]{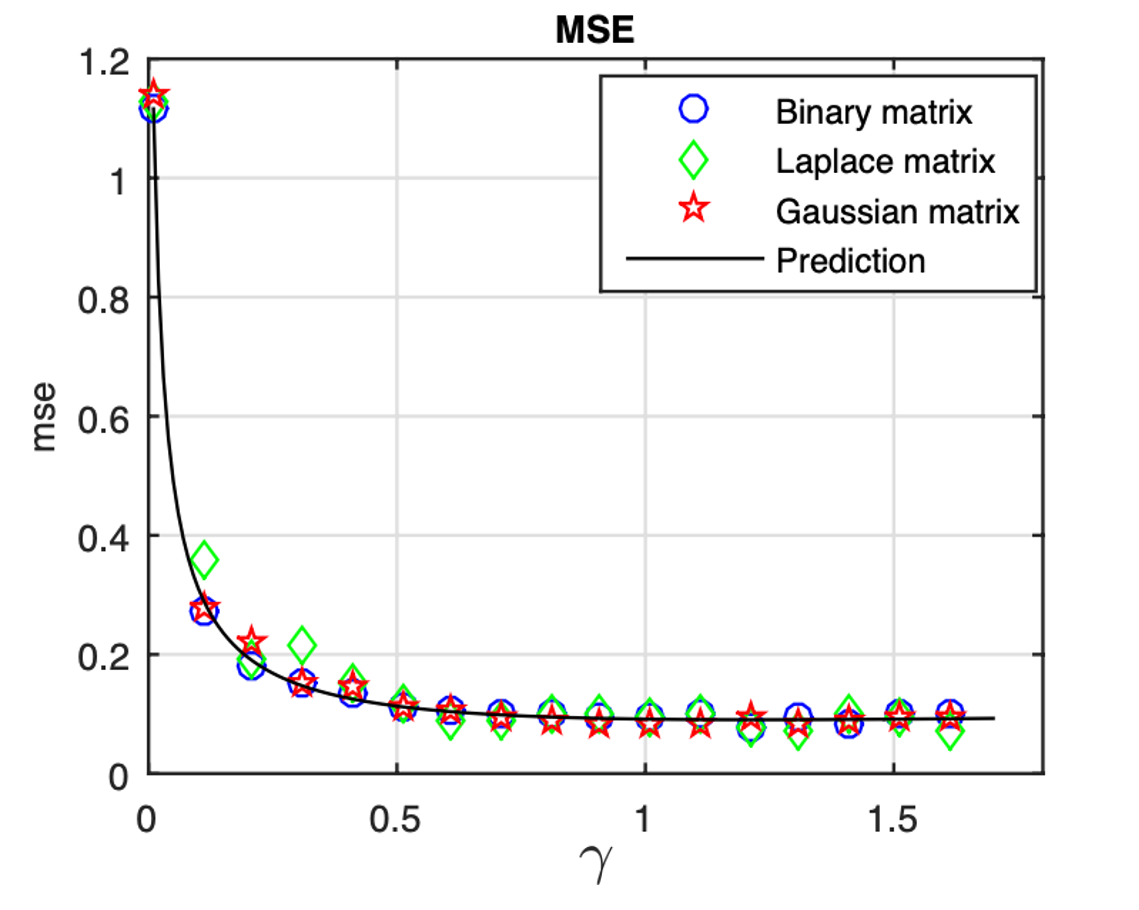}
\caption{\scriptsize{MSE of the Box-LASSO for different measurement matrices. The theoretical prediction is based on Theorem \ref{EN_mse} with a sparse-Bernoulli $\sv_0$ signal.  We used $\kappa =0.1,\eta = 0.8,n = 200$, $T = 700,T_t =256, \nu =0.6, P= 5 \ {\mathrm{dB}}$.}}%
\label{fig:univ}
\end{figure}
\begin{remark}[Efficient implementation]\normalfont
It is worth noting that the Box-LASSO can be efficiently implemented via \textit{quadratic programming (QP)} as in \cite{gaines2018algorithms}, where it was used to implement an efficient algorithm of the constrained LASSO. We applied the same algorithm to the Box-LASSO decoder in the above numerical simulations utilizing MATLAB built-in function $\mathsf{quadprog}$.
\end{remark}
\section{GSSK Modulated Massive MIMO Systems}
\label{sec:GSSK}
Traditional linear modulation schemes become expensive in modern massive MIMO systems. This is due to the large required number of radio frequency (RF) chains needed for the massive number of antennas. 
One promising modulation technique is the so-called spatial modulation (SM), where only the antenna's location relays information and only a small subset of the antennas is active at each time \cite{mesleh2008spatial}. This significantly reduces the system complexity since the required number of RF chains is less now. This modulation scheme saves energy; since using fewer RF chains, we have less power dissipation through the power amplifiers, etc. Modern SM techniques include generalized space-shift keying (GSSK) modulation \cite{jeganathan2008generalized, jeganathan2009space}, and the generalized spatial modulation (GSM) \cite{younis2010generalised}.
\subsection{Box-LASSO for Detecting GSSK Modulated Signals}
As mentioned above, recently developed modulation techniques such as GSM and GSSK modulation, generate signals that are essentially sparse and have elements belonging to a finite alphabet (i.e., bounded). 
Hence, when such modulations are employed, we will use the Box-LASSO as low-complexity decoding method, instead of the previously proposed standard LASSO decoders \cite{yu2012compressed, liu2013denoising}.

For the sake of simplicity, we will focus on GSSK.
Considering a modulation setup, where a fixed-size set $\mathcal{I} \subset \{1, \cdots, n\}$ of active antennas transmit $s_{0,j} = 1, j \in \mathcal{I}$ at each transmission, while the remaining antennas stay inactive, i.e., $s_{0,j} = 0, j \notin \mathcal{I}$.
Hence, only active antennas positions convey information.
%
%
%

To decode $\sv_0$, firstly, get a solution $\widehat{\sv}$ of the Box-LASSO in \eqref{Box-EN_1}, with $\ell =0$ and $\mu =1$. 
Then, map $\widehat{\sv}$ into a vector ${\sv}^\star$ which has elements either $0$ or $1$. 
In the GSSK context, this typically entails sorting $\widehat{\sv}$ and setting its largest $k$ entries to $1$ and the remaining to $0$ \cite{yu2012compressed}.
%

In order to evaluate the performance of the Box-LASSO in this application, we will use the so-called \emph{element error rate} (EER) \cite{atitallah2017box} as a performance measure. 
Similar to the support recovery metric, we first hard-thresholding $\widehat{\sv}$ by a constant $\upzeta \in (0, 1)$, in order to map each of its element to either $0$ or $1$.
Then, the EER can be defined as 
\begin{equation}\label{eer_def}
 {\rm{EER}}_\upzeta := \frac{1}{| \mathcal I|} \sum_{j \in \mathcal I} \mathbbm{1}_{\{| \widehat{s}_{j}| \leq \upzeta \}} + \frac{1}{n-| \mathcal I|} \sum_{j \notin \mathcal I} \mathbbm{1}_{\{| \widehat{s}_{j}| \geq \upzeta \}}.
\end{equation}

The next proposition gives a sharp asymptotic prediction of the EER in the GSSK modulated MIMO systems application.

\begin{proposition}\label{GSSK_EER} \normalfont
Let ${\rm{EER}}_\upzeta$ be as defined in the above equation with $|\mathcal I| = \kappa n$, for $\kappa \in (0, 1)$. Also, let $\beta_\star$,  and $\lambda_\star$ be solutions to the minimax optimization in \eqref{eq:scalar1}, with $p_{s_0} =  (1-\kappa) \delta_0 +\kappa \delta_{1} $ therein. Then, for a fixed $\upzeta \in (0,1)$, it holds
\begin{align}\label{GSSK_EER:eq}
\underset{{n \to \infty}}{\rm{plim}} \  {\rm{EER}}_{\upzeta} = \ & Q \left(  (1 -\upzeta ) \lambda_\star \sqrt{\eta P_d \sigma_{\widehat H}^2} - \frac{\gamma \sqrt{P_d}}{\beta_\star  \sigma_{\widehat H}} \right) 
+ Q \left(  \upzeta  \lambda_\star \sqrt{\eta P_d \sigma_{\widehat H}^2} +  \frac{\gamma \sqrt{P_d}}{\beta_\star  \sigma_{\widehat H}}   \right).
\end{align}
\end{proposition}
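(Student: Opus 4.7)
The plan is to obtain Proposition~\ref{GSSK_EER} as a direct corollary of Theorem~\ref{EN_on/off}, by specializing the sparse-Bernoulli prior to the GSSK alphabet and exploiting the fact that the saturated shrinkage function \eqref{soft_TH} collapses to a one-sided map when $\ell=0$, $\mu=1$. First I would rewrite the element error rate as a sum of the two failure rates already analyzed: since $\mathcal{I}=\mathcal{T}(\sv_0)$ has cardinality $|\mathcal{I}|=\kappa n$, a comparison of \eqref{eer_def} with \eqref{supp} yields (modulo measure-zero events at $|\widehat s_j|=\upzeta$, which vanish almost surely because the limiting distribution of $\widehat s_j$ has continuous density away from its atoms at $\{0,1\}$)
\begin{equation*}
{\rm EER}_\upzeta \;=\; \bigl(1-\Uppsi_\upzeta^{\rm on}(\widehat\sv)\bigr)\;+\;\bigl(1-\Uppsi_\upzeta^{\rm off}(\widehat\sv)\bigr).
\end{equation*}
Thus it suffices to plug the GSSK prior $p_{s_0}=(1-\kappa)\delta_0+\kappa\delta_1$ and the box parameters $(\ell,\mu)=(0,1)$ into formulas \eqref{eq:on_supp}-\eqref{eq:off_supp} and evaluate both probabilities in closed form.

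Next I would compute the two probabilities. With $\ell=0$, the branch $a+\gamma$ on $(\ell-\gamma,-\gamma)=(-\gamma,-\gamma)$ of \eqref{soft_TH} is empty, so the saturated shrinkage reduces to: $\mathcal{H}(a;\gamma',0,1)=1$ for $a\ge 1+\gamma'$, $=a-\gamma'$ for $\gamma'<a<1+\gamma'$, and $=0$ for $a\le\gamma'$. Writing $c:=\lambda_\star\sqrt{\eta P_d\sigma_{\widehat H}^2}$ and $\gamma':=\gamma/(\beta_\star\lambda_\star\sqrt{\eta}\,\sigma_{\widehat H}^2)$, for $\upzeta\in(0,1)$ the event $\{|\mathcal{H}(a;\gamma',0,1)|\ge\upzeta\}$ is equivalent to $\{a\ge\upzeta+\gamma'\}$, while $\{|\mathcal{H}(a;\gamma',0,1)|\le\upzeta\}$ is equivalent to $\{a\le\upzeta+\gamma'\}$. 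Applying these to \eqref{eq:on_supp} with $S_0=1$ and to \eqref{eq:off_supp}, and using $\mathbb{P}[Z\ge x]=Q(x)$ together with $Q(-x)=1-Q(x)$, gives
\begin{align*}
1-\Uppsi_\upzeta^{\rm on}(\widehat\sv) &\;\pto\; Q\!\Bigl(c(1-\upzeta-\gamma')\Bigr), \\
1-\Uppsi_\upzeta^{\rm off}(\widehat\sv) &\;\pto\; Q\!\Bigl(c(\upzeta+\gamma')\Bigr).
\end{align*}
A short algebraic simplification, using $c\gamma'=\gamma\sqrt{P_d}/(\beta_\star\sigma_{\widehat H})$ and $c=\lambda_\star\sqrt{\eta P_d\sigma_{\widehat H}^2}$, turns these two $Q$-function arguments into exactly the expressions displayed in \eqref{GSSK_EER:eq}, and summing completes the proof.

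The only non-routine step is the decomposition in the first paragraph, which requires a brief argument that the boundary event $\{|\widehat s_j|=\upzeta\}$ contributes negligibly in the limit. I would handle this by noting from the CGMT analysis underlying Theorem~\ref{EN_on/off} that the empirical distribution of the entries of $\widehat{\sv}$ converges weakly to the law of $\mathcal{H}(S_0+Z/c;\gamma',0,1)$, whose cumulative distribution function is continuous at any $\upzeta\in(0,1)$ (the atoms of this limit law sit at $0$ and $1$, both different from $\upzeta$). The rest is bookkeeping on $Q(\cdot)$ and no new CGMT machinery is needed, which is why the result is stated as a proposition rather than a theorem.
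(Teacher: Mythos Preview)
Your proof is correct and follows essentially the same route as the paper's own argument: the paper also reduces Proposition~\ref{GSSK_EER} to Theorem~\ref{EN_on/off} via the identity ${\rm EER}_\upzeta=2-\Uppsi_\upzeta^{\rm on}(\widehat\sv)-\Uppsi_\upzeta^{\rm off}(\widehat\sv)$ and then invokes the sparse-Bernoulli specializations \eqref{eq:on-Bernoulli}--\eqref{eq:off-Bernoulli} with $E=1$. Your write-up is in fact slightly more self-contained, since you carry out explicitly the collapse of the saturated shrinkage to a one-sided map when $\ell=0$ and you address the measure-zero boundary event $\{|\widehat s_j|=\upzeta\}$, points the paper leaves implicit.
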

\begin{proof}
The proof follows from Theorem~\ref{EN_on/off} by observing that the EER in \eqref{eer_def} may be rewritten as
 \begin{equation}
 {\rm{EER}}_\upzeta =  2 -\Uppsi_{\upzeta}^{\text{on}}(\widehat{\sv}) -\Uppsi_{\upzeta}^{\text{off}}(\widehat{\sv}),
\end{equation}
with the on/off support probability expressions of the sparse-Bernoulli distribution derived earlier in \eqref{eq:on-Bernoulli} and \eqref{eq:off-Bernoulli} for $E=1$.
\end{proof}
On the receiving side of some MIMO systems, there may not be sufficient number of antennas. This is owing to the receiver's small size, limited cost or weight, and low power consumption.
Such MIMO systems, where the number of receive antennas $m$ is less than that of the transmitters $n$ (i.e., $m < n$), are known as \emph{overloaded} (or underdetermined) MIMO systems \cite{wong2007efficient}.
\figref{fig:EER} illustrates the accuracy of the derived EER expression for a case of an overloaded system, with $\eta =0.8$. This figure shows that the Box-LASSO outperforms the standard one in the EER sense as well. 
\begin{figure}[h!]%
\begin{center}
\includegraphics[width=8.7cm, height =6.50cm]{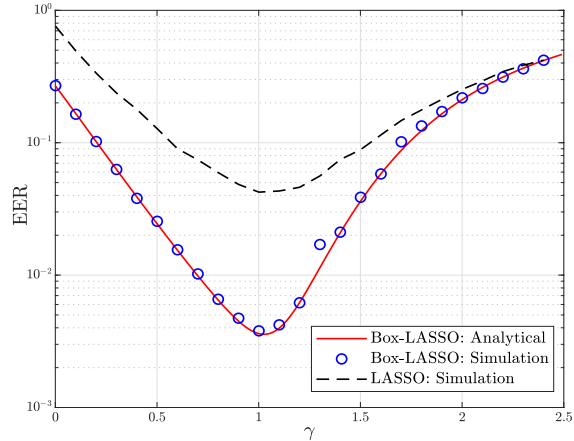}
\caption{\scriptsize{Element Error Rate of the Box-LASSO and the standard LASSO for GSSK signal recovery. Here, we used $T =500, m=120, n=T_t =150, k =15, \upzeta = 0.1, \nu = 0.5$, and $P = \mathrm{  10 \ dB}$. The data are averaged over $100$ independent iterations.}}%
\label{fig:EER}
\end{center}
\end{figure}

\subsection{Power Allocation and Training Duration Optimization}
The overall massive MIMO system's performance can be improved by optimizing the power allocation between the transmitted pilot and data symbols as oppose to equal power distribution \cite{hassibi2003much}.
Power optimization problems in MIMO systems have been proposed based on different performance metrics. 
In \cite{ballal2019optimum, zhao2017game}, the authors derived a power allocation scheme based on minimizing the MSE, while minimizing the the bit error rate (BER) and SER was considered in \cite{wang2014ber, alrashdi2020optimum, alrashdi2018optimum}. 
Training optimization based on maximizing the channel capacity was addressed in \cite{gottumukkala2009optimal, kannu2005capacity, hassibi2003much}.  In addition, the authors in \cite{dao2018pilot, muharar2020optimal, lu2018training} provided power allocation strategies based on maximizing the sum rates.
The above list of references is not comprehensive, since power allocation optimization research has very rich literature. However, we cited the most related works to this paper.

\subsubsection{Optimal Power Allocation}
In this subsection, we will use the previously derived asymptotic results for the MSE and EER to find an optimal power allocation scheme, in a GSSK modulated system, that minimizes these error measures.

For fixed $\tau_t $ and $\tau$, the power allocation optimization problem can be caste as
\begin{align*}
&\min_{P_t, P_d} \rm{MSE} \\
& \text{subject to:} \ P_t \tau_t  + P_d (\tau-\tau_t ) = {P} \tau, \\
& P_t = (1-\nu) {P} \tau,  \ P_d  = \nu {P} \tau, \ \ 0<\nu<1.
\end{align*}
It can be shown that the above optimization problem boils down to only optimizing the data energy ratio $\nu$.
The results are summarized next.

For fixed $\tau_t $ and $\tau$, and using Box-LASSO with an optimal regularizer $\gamma_\star^{\mathrm{MSE}}$ as in \eqref{opt_Reg_MSE}, the optimal power allocation that minimizes the MSE is given by
\begin{align}
\nu_\star^{\rm{MSE}} = \argmin_{0<\nu<1} \rm{MSE},
\end{align}
where $\rm {MSE}$ is the asymptotic MSE expression in \eqref{eq:MSE}. 

Similarly,  when using the EER as a performance metric, we have 
\begin{align}
\nu_\star^{\rm{EER}} = \argmin_{0<\nu<1} \rm{EER}_\upzeta(\gamma_\star),
\end{align}
where $\rm {EER}_\upzeta(\gamma_\star)$ is the asymptotic EER expression in \eqref{GSSK_EER:eq} with the optimal regularizer $\gamma_\star$ that minimizes the EER. 
For the Box-LASSO decoder, finding $\nu_\star^{\rm{MSE}}$ or $\nu_\star^{\rm{EER}}$ in closed-form expressions seems to be a difficult task, but by using a bisection method we can numerically find the optimal power allocation scheme. 

In \figref{fig:opt_power}, we plotted the MSE and EER of the Box-LASSO and standard LASSO as functions of the data energy ratio $\nu$.
This figure shows that optimizing the MSE and EER are equivalent with $\nu_\star \approx0.5373$. Furthermore, it shows that the optimal power allocation is nothing but the well-known scheme $\bar\nu_\star$ which was shown in \cite{alrashdi2020optimum, hassibi2003much} to maximize the effective SNR, where $\bar\nu_\star$ is given as
(\cite[eq. (35)]{alrashdi2020optimum}):
\begin{equation}\label{optimal_power}
\bar\nu_\star =
\begin{cases}
\vartheta  - \sqrt{\vartheta (\vartheta  -1)} & \text{if $\tau_{d} > 1$,} \\
\frac{1}{2} & \text{if $\tau_{d} =1$,} \\
\vartheta  + \sqrt{\vartheta (\vartheta  -1)} & \text{if $\tau_d< 1$,}
\end{cases}
\end{equation}
where 
\begin{align}
\vartheta = \frac{1 + P\cdot \tau}{P\cdot \tau (1 - \frac{1}{\tau_d})}.
\end{align}

This result is significant, since it showed again that the optimal power allocation scheme is nothing but the celebrated one that maximizes the effective SNR of the MIMO system, i.e., $\bar\nu_\star $. The power allocation actually does not depend on the type of the modulation constellation used, the used detector, or the other problem parameters such as $\eta$ and $\kappa$.\footnote{Provided that we use the LMMSE estimator for the channel estimation.}
For example, in \cite{alrashdi2020optimum}, the same power allocation scheme was obtained for a massive MIMO system with $M$-PAM signals and a Box-regularized least squares (Box-RLS) detector, while this work employs the Box-LASSO decoder for GSSK signal recovery.

\begin{figure*}
\begin{subfigure}{.5\textwidth}
  \centering
\includegraphics[width=8cm, height =6.1cm]{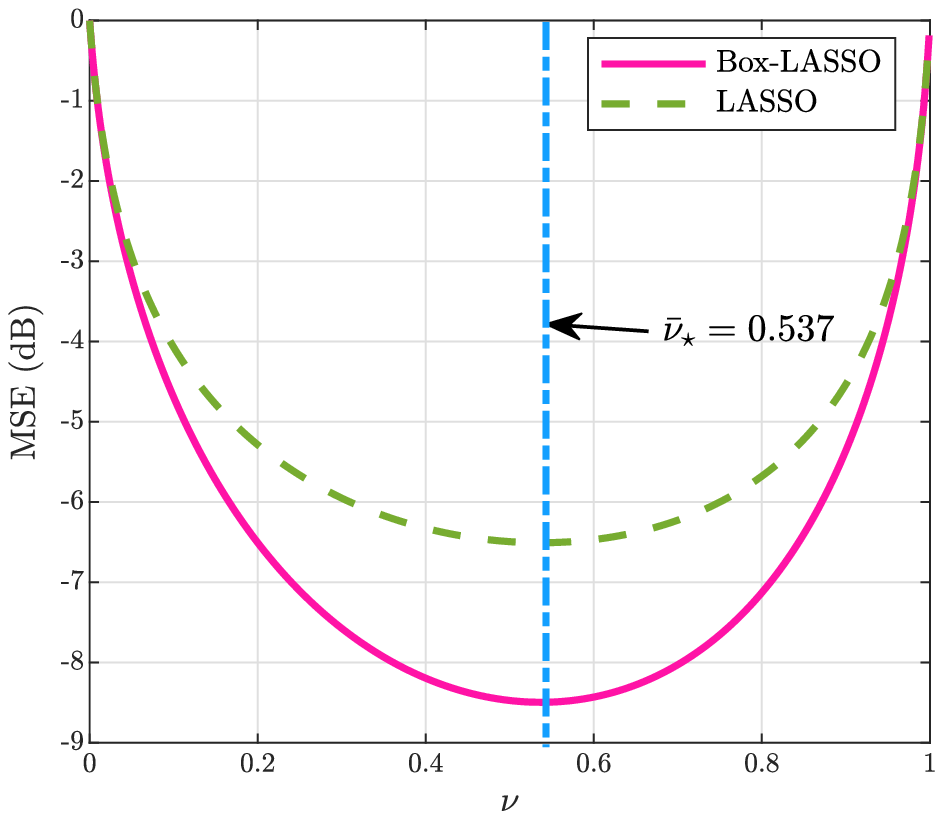}
\caption{\scriptsize{MSE vs. data energy ratio $\nu$.}}
\end{subfigure}
\begin{subfigure}{.5\textwidth}
\centering
\includegraphics[width=8cm, height =6.1cm]{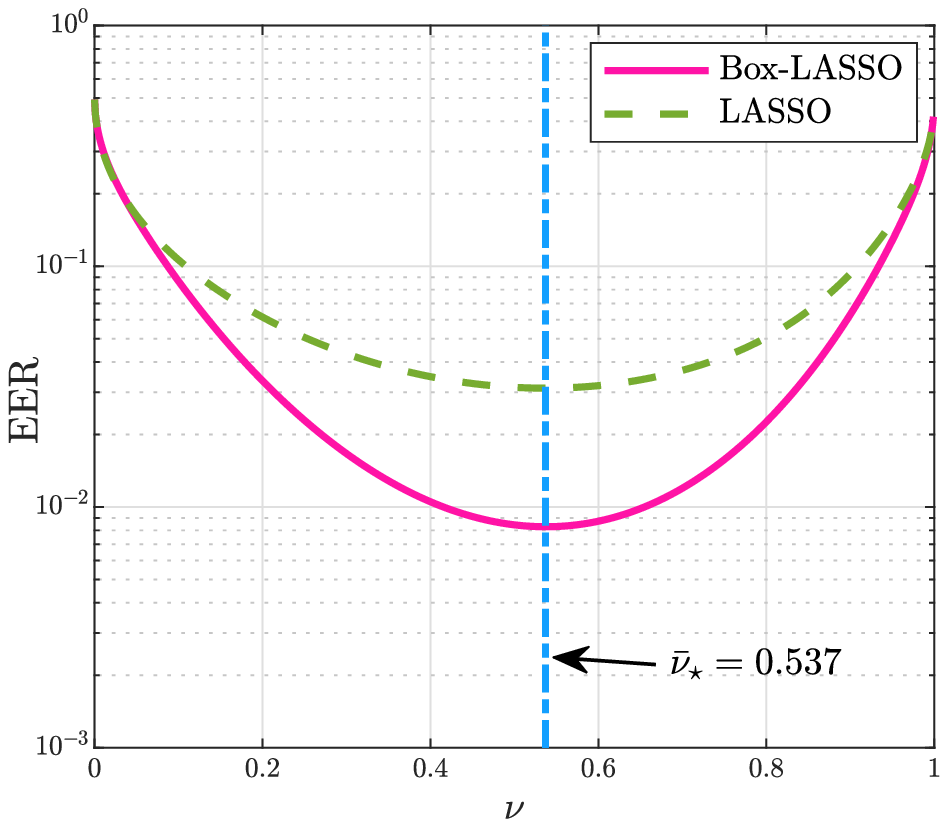}
\caption{\scriptsize{EER vs. data energy ratio $\nu$.}}
\end{subfigure}
\caption{\scriptsize{This figure shows the MSE/EER of Box-LASSO/LASSO as functions of $\nu$. We used $T =1000, n=400, T_t =456, P= 12 \ \mathrm{dB}, \eta =1.5,\kappa =0.1,$ and $ \upzeta =0.01$. }}%
\label{fig:opt_power}
\end{figure*}
\subsubsection{Optimal Training Duration}
In order to optimize the training duration, we introduce the following performance metric, which is called the \textit{goodput}. 
The goodput is calculated by dividing the amount of useful transmitted data by the time it takes to send it successfully \cite{grote2008ieee, alrashdi2020optimum}. Formally, it can be defined as 
\begin{align}\label{goodput:eq1}
{\rm{Goodput}}(\tau_t, \tau) = \bigg(1-\frac{\tau_t}{\tau}\bigg) (1-\rm{EER}).
\end{align}
The optimal value $T_t^\star$ that maximizes the goodput is determined in the following Corollary. 
For a fixed power allocation, the goal is to identify the optimal number of training symbols out of the total coherence interval symbols. 
From \eqref{eq:n pilots}, we must have $T_t^\star \geq n$ (or, $\tau_t^\star \geq 1$), 
and obviously,  $T_t^\star<T$ (or, $\tau_t^\star < \tau$). 
Therefore, $\tau_t^\star$ is a solution to the maximization problem:
\begin{align}\label{G:eq}
\tau_t^\star = \mathrm{arg} \max_{1 \leq \tau_t < \tau} \overline{\rm{Goodput}}(\tau_t, \tau),
\end{align}
where 
\begin{align}
\overline{\rm{Goodput}}(\tau_t, \tau): =\big(1-\frac{\tau_t}{\tau}\big) [1-{\rm{plim}}_{n \to \infty} \ \rm{EER}_\upzeta(\gamma_\star)]
\end{align}
 is the asymptotic value of the goodput.
\begin{corollary}[Optimal Training Duration]\label{Corr:opt_Training}\normalfont
Under imperfect CSI, the optimal training duration that maximizes the goodput in \eqref{G:eq} is given by:
\begin{align}
\tau_t^\star = 1 \ (\text{or} \ T_t^\star = n),
\end{align}
for all ${P}$ and $\tau$ (or $T$).
\end{corollary}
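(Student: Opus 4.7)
The plan is to reduce Corollary~\ref{Corr:opt_Training} to a one-dimensional monotonicity claim. After substituting the optimal power allocation $\bar\nu_\star$ from \eqref{optimal_power}, everything depends on $\tau_t$ only through $\tau_d = \tau - \tau_t$, so I would treat the asymptotic goodput as the single-variable function $G(\tau_d) := (\tau_d/\tau)\bigl(1 - \mathrm{plim}_{n\to\infty}\,\mathrm{EER}_\upzeta(\gamma_\star)\bigr)$ on $\tau_d \in [1,\tau-1]$ and show it is strictly increasing. The maximizer is then forced to the right endpoint $\tau_d^\star = \tau - 1$, equivalently $\tau_t^\star = 1$ (or $T_t^\star = n$), as claimed.

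The first ingredient is a closed form for the effective SNR under the optimal power allocation. Substituting $P_t = (1-\bar\nu_\star)P\tau/\tau_t$ and $P_d = \bar\nu_\star P\tau/\tau_d$ into the Hassibi--Hochwald training gain $\rho_{\mathrm{eff}} := P_d(1-\sigma_\omega^2)/(1+P_d\sigma_\omega^2)$ and simplifying with the explicit $\bar\nu_\star$ collapses the quadratic radicals and yields
$$
\rho_{\mathrm{eff}}^\star(\tau_d) \;=\; \frac{(P\tau)^2}{\bigl(\sqrt{\tau_d(1+P\tau)} + \sqrt{\tau_d+P\tau}\bigr)^2},
$$
which is strictly decreasing in $\tau_d$. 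Using this together with the first-order optimality conditions of the minimax \eqref{eq:scalar1}, the two $Q$-function arguments in Proposition~\ref{GSSK_EER} at the EER-minimizing regularizer $\gamma_\star$ depend on the channel/noise parameters only through $\rho_{\mathrm{eff}}^\star$, so $\mathrm{plim}_{n\to\infty}\,\mathrm{EER}_\upzeta(\gamma_\star) = \mathcal{E}\bigl(\rho_{\mathrm{eff}}^\star(\tau_d)\bigr)$ for some decreasing function $\mathcal{E}$ (a sum of two Gaussian tails).

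The remaining task is to verify $G'(\tau_d) > 0$ on $[1,\tau-1]$; intuitively, the multiplicative throughput gain from increasing $\tau_d$ must outweigh the per-symbol penalty in success probability. A chain-rule computation gives
$$
\tau\,G'(\tau_d) \;=\; \bigl(1 - \mathcal{E}(\rho_{\mathrm{eff}}^\star)\bigr) \;-\; \tau_d \,|\mathcal{E}'(\rho_{\mathrm{eff}}^\star)|\,\Bigl|\tfrac{d\rho_{\mathrm{eff}}^\star}{d\tau_d}\Bigr|,
$$
so the proof reduces to a scalar inequality parametrized by $(P,\tau,\tau_d)$. The main obstacle is that $\mathcal{E}'$ is defined only implicitly through the minimax \eqref{eq:scalar1}, with $\beta_\star,\lambda_\star$ given as solutions of a coupled fixed-point system rather than in closed form. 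I would handle this by applying the implicit function theorem to the first-order optimality conditions of \eqref{eq:scalar1} to express $\mathcal{E}'(\rho_{\mathrm{eff}}^\star)$ as an explicit ratio of Gaussian expectations, and then bound that ratio using the Mills-ratio inequality $Q(x) \leq \phi(x)/x$. Combined with the explicit expression for $d\rho_{\mathrm{eff}}^\star/d\tau_d$ obtained by differentiating the closed form above, the inequality $G'(\tau_d) > 0$ can then be verified uniformly for every $P > 0$ and every $\tau > 1$, establishing the corollary.
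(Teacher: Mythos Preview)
The paper gives no self-contained argument for this corollary; it simply cites \cite{alrashdi2020optimum} and omits all details, so the only thing to compare against is the Hassibi--Hochwald-style reasoning that reference presumably contains. Your reduction to a single effective-SNR parameter is the right first move and is almost certainly what the cited reference does as well: after rescaling $\widehat\Hm\mapsto\widehat\Hm/\sigma_{\widehat H}$ and normalising by the effective noise level, the optimally-regularized Box-LASSO depends on $(P_d,\sigma_\omega^2,\sigma_{\widehat H}^2)$ only through one scalar. (A small slip: the correct effective SNR here is $P_d\sigma_{\widehat H}^2/(1+\kappa P_d\sigma_\omega^2)$, with a factor $\kappa$ in the denominator, since $\|\sv_0\|_2^2/n\to\kappa$ rather than $1$.)

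There are, however, two genuine gaps. First, a setup mismatch: the corollary is stated \emph{for a fixed power allocation} --- the text immediately preceding \eqref{G:eq} says so explicitly, and Fig.~\ref{fig:goodput} uses $\nu=0.5$, not $\bar\nu_\star$ --- whereas you substitute the optimal $\bar\nu_\star(\tau_d)$ from \eqref{optimal_power}. Proving $\tau_t^\star=1$ under the jointly-optimal $\nu$ does \emph{not} imply it for an arbitrary fixed $\nu$, which is the stronger (and stated) claim; for fixed $\nu$ the relevant dependence is the simpler $\rho_{\mathrm{eff}}(\tau_d)=C/(\tau_d+D)$, and your argument would have to be rerun in that form. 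Second, even granting your setup, the core step --- that $G'(\tau_d)>0$ uniformly in $(P,\tau,\tau_d)$ --- is only outlined, not established. Implicitly differentiating the coupled fixed-point system $\nabla\mathcal G(\beta_\star,\lambda_\star)=\mathbf 0$ to express $\mathcal E'$ as ``an explicit ratio of Gaussian expectations'' is a substantial computation you have not carried out, and the Mills-ratio bound $Q(x)\le\phi(x)/x$ you plan to invoke requires $x>0$, which the $Q$-function arguments in \eqref{GSSK_EER:eq} need not satisfy at low SNR. As written, the final paragraph is a hopeful plan rather than an argument, and that is precisely where the content of the corollary lives.
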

\begin{proof}
This result can be proven in a similar manner to \cite{alrashdi2020optimum}, details are thus omitted.
\end{proof}
\figref{fig:goodput} shows the goodput performance of Box-LASSO simulated versus the training duration $T_t$ which confirms the result of Corollary \ref{Corr:opt_Training}. It shows that at $T_t = n=200$, the goodput is maximized.
\begin{figure}[h]%
\begin{center}
\includegraphics[width=8.43cm, height =6.4cm]{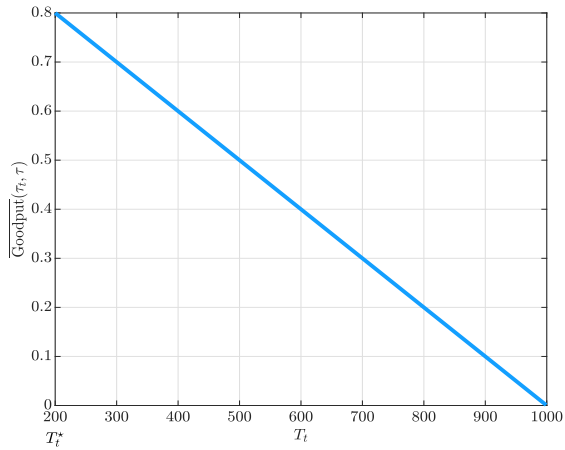}
\caption{\scriptsize{Goodput performance vs. $T_t$. We used $T = 1000, P=12 \ \mathrm{dB},n =200,\nu =0.5,\eta =1.5,\kappa =0.1, $ and $\upzeta =0.01$.}}%
\label{fig:goodput}
\end{center}
\end{figure}
\section{Conclusion and Future Work}
\label{sec:conc}
In this work, we derived sharp asymptotic characterizations of the mean square error, probability of support recovery and element error rate of the Box-LASSO under the presence of uncertainties in the channel matrix in the form of channel estimation errors.  
The analytical tool used in the analysis is the recently developed CGMT framework.
The derived expressions can be used to optimally tune the involved hyper-parameters of the algorithm such as the regularizer. 
Then, we used the Box-LASSO as a low complexity decoder in an application of massive MIMO detection using GSSK modulation, and optimize the power allocation between training and data symbols to minimize the MSE or EER of the system.
Furthermore, we derived the optimal training duration that maximizes the system's goodput. Numerical simulations show very close agreement to the derived theoretical predictions. Moreover, we showed that the Box-LASSO outperforms standard one in all of the considered performance metrics.

Finally, we highlight that the generalized spacial modulation (GSM) is more involved than the considered GSSK modulation since it uses the positions of active antenna in addition to a constellation symbol (e.g., $M$-QAM, $M$-PSK, etc.) to encode information \cite{younis2010generalised}. 
However, we focused in this paper on GSSK systems since the analysis framework, namely the CGMT, requires real-valued quantities (signals and channels). 
An interesting possible future work is to extend the results of this paper to systems involving complex-valued data such as GSM and investigate their power allocation optimization. 
Moreover, this paper assumes Gaussian channels matrices with i.i.d.  entries, but in numerous wireless communication applications, there are usually spatial correlations between the antennas. Therefore, another possible extension is to study correlated massive MIMO systems, where the matrix entries are no longer i.i.d.

\section*{Acknowledgments}
This research has been funded by the Scientific Research Deanship at the University
of Ha'il--Saudi Arabia through project number BA-2136.


\appendices
\appendix 
\section{Sketch of the Proof}
\label{sec:proof}
In this appendix, we derive the asymptotic analysis of the considered Box-LASSO problem's performance metrics. Our analysis is based on the CGMT, which is discussed more below.
\subsection{Technical Tool: CGMT}
%
Firstly, we summarize the CGMT framework \cite{thrampoulidis2016precise} before the proof of our main results.
For a comprehensive list of technical requirements, please see \cite{thrampoulidis2016precise, thrampoulidis2018symbol}.
Consider the following two optimization problems, which we call the Primal Optimization (PO) and Auxiliary Optimization (AO) problems, respectively.
\begin{subequations}
\begin{align}\label{P,AO}
&\Upphi(\Cm) := \underset{\xv \in \mathcal{S}_{\xv}}{\operatorname{\min}}  \ \underset{\yv \in \mathcal{S}_{\yv}}{\operatorname{\max}} \ \yv^{T} \Cm \xv + \xi( \xv, \yv), \\
&\upphi(\gv_1, \gv_2) := \underset{\xv \in \mathcal{S}_{\xv}}{\operatorname{\min}}  \ \underset{\yv \in \mathcal{S}_{\yv}}{\operatorname{\max}} \ \| \xv \| \gv_1^{T} \yv + \| \yv \| \gv_2^{T} \xv + \xi( \xv, \yv), \label{AA2}
\end{align}
\end{subequations}
where $\Cm \in \mathbb{R}^{\tilde m \times \tilde n}, \gv_1 \in \mathbb{R}^{\tilde m}, \gv_2 \in \mathbb{R}^{\tilde n}, \mathcal{S}_\xv \subset \mathbb{R}^{\tilde n}, \mathcal{S}_\yv \subset \mathbb{R}^{\tilde m}$ and $\xi: \mathbb{R}^{\tilde n} \times \mathbb{R}^{\tilde m} \mapsto \mathbb{R}$. Moreover, the function $\xi$ is assumed to be independent of the matrix $\Cm$. Denote by $\xv_{\Upphi} := \xv_{\Upphi}(\Cm) $, and $\xv_{\upphi} := \xv_{\upphi}( \gv_1, \gv_2)$ any optimal minimizers of (\ref{P,AO}) and (\ref{AA2}), respectively. Further let $\mathcal{S}_\xv,$ and $\mathcal{S}_\yv$ be convex and compact sets, $\xi(\xv,\yv)$ is convex-concave continuous on $\mathcal{S}_\xv \times \mathcal{S}_\yv$, and let $\Cm, \gv_1$ and $\gv_2 $ all have i.i.d. standard normal entries.

The PO-AO equivalence is formally stated in the next theorem, the proof of which can be found in \cite{thrampoulidis2016precise}.
\begin{theorem}[CGMT \cite{thrampoulidis2016precise}]\label{Th:CGMT}\normalfont
Under the above assumptions, the CGMT \cite[Theorem 3]{thrampoulidis2016precise} shows that the following
statements are true:\\
\noindent
\textbf{(a)} For all $c \in \mathbb{R}$ and $t>0$, it holds
\begin{align}
\mathbb{P}\left[\left| \Upphi(\Cm)-c \right|> t\right] \leq 2 \mathbb{P}\left[\left| \upphi(\gv_1, \gv_2)-c \right|> t\right].
\end{align}
In words, concentration of the optimal cost of the AO problem around $c$ implies concentration
of the optimal cost of the corresponding PO problem around the same value $c$. 
According the CGMT,  this will finally imply that the \emph{original} optimal cost (e.g., the Box-LASSO of \eqref{Box-EN_1} in this paper) will concentrate around $c$ as well. Please see \figref{fig:obj} for a numerical illustration.\\

\noindent
\textbf{(b)} Let $\mathcal{S}$ be any arbitrary open subset of $\mathcal{S}_\xv $, and $\mathcal{S}^c = \mathcal{S}_\xv \setminus\mathcal{S}$. Denote $\upphi_{\mathcal{S}^c}(\gv_1,\gv_2)$ the optimal cost of the optimization in (\ref{AA2}), when the minimization over $\xv$ is constrained over $\xv \in \mathcal{S}^c$. Consider the regime $\tilde m,  \tilde n \to\infty$ such that $\frac{\tilde m}{ \tilde n}\to \eta$. To keep notation short, this regime is denoted by $\tilde n\to\infty$. 
Suppose that there exist constants $\overline{\phi}$ and $\overline{\phi}_{\mathcal{S}^c}$ such that\\
(i) $\overline{\phi} < \overline{\phi}_{\mathcal{S}^c}$,
(ii) $\upphi(\gv_1,\gv_2) \overset{P}{\longrightarrow} \overline{\phi}$,  and
(iii) $\upphi_{\mathcal{S}^c}(\gv_1,\gv_2) \overset{P}{\longrightarrow} \overline{\phi}_{\mathcal{S}^c}$. 

\noindent
Then,  if in addition, $\lim_{\tilde n \rightarrow \infty} \mathbb{P}[\xv_{\upphi} \in \mathcal{S}] = 1,$ it also holds that 
\begin{align}
\lim_{\tilde n \rightarrow \infty} \mathbb{P}[\xv_{\Upphi} \in \mathcal{S}] = 1.
\end{align}
\end{theorem}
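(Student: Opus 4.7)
The plan is to deduce Theorem~\ref{Th:CGMT} from Gordon's Gaussian min-max comparison lemma, leveraging the convex-concavity of $\xi$ and the compactness-convexity of $\mathcal{S}_{\xv}, \mathcal{S}_{\yv}$. The starting point is the Slepian-Gordon comparison for Gaussian processes applied to the two processes $X_{\xv,\yv} := \yv^{T} \Cm \xv$ and $Y_{\xv,\yv} := \|\xv\|_{2}\, \gv_1^{T} \yv + \|\yv\|_{2}\, \gv_2^{T} \xv$ indexed by $(\xv,\yv) \in \mathcal{S}_{\xv} \times \mathcal{S}_{\yv}$. A direct covariance calculation verifies the inequality $\mathbb{E}[X_{\xv,\yv} X_{\xv',\yv'}] \leq \mathbb{E}[Y_{\xv,\yv} Y_{\xv',\yv'}]$, with equality when $(\xv,\yv) = (\xv',\yv')$. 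Gordon's lemma then yields, for every $c \in \mathbb{R}$, the one-sided tail bound $\mathbb{P}[\Upphi(\Cm) \leq c] \leq 2\, \mathbb{P}[\upphi(\gv_1,\gv_2) \leq c]$, the additive term $\xi(\xv,\yv)$ being independent of the Gaussians and so harmless to the comparison.

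The main obstacle is upgrading this one-sided comparison to the two-sided concentration claimed in part~(a): Gordon's lemma is inherently asymmetric, whereas concentration around $c$ requires simultaneous control of both tails. The key idea, originally due to Stojnic \cite{stojnic2013framework} and formalized in \cite{thrampoulidis2016precise}, is to exploit convex-concavity. Under the stated hypotheses, Sion's minimax theorem permits swapping the $\min_{\xv}\max_{\yv}$ order in both the PO and AO. The swapped PO is still a saddle-point problem in a jointly Gaussian bilinear form, to which the same Slepian-Gordon comparison can be re-applied with the roles of $\xv$ and $\yv$ interchanged. This delivers the reverse bound $\mathbb{P}[\Upphi(\Cm) \geq c] \leq 2\, \mathbb{P}[\upphi(\gv_1,\gv_2) \geq c]$. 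A union bound on the two tails produces the symmetric inequality stated in part~(a). This is also the step where the compactness of $\mathcal{S}_{\xv}, \mathcal{S}_{\yv}$ is essential, since Sion's theorem and the finiteness of both saddle values rely on it.

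For part~(b), the argument is by contradiction and makes essential use of part~(a) applied to the \emph{restricted} optimization in which $\xv$ is constrained to $\mathcal{S}^{c} := \mathcal{S}_{\xv} \setminus \mathcal{S}$. Since $\mathcal{S}^{c}$ is closed in the compact convex set $\mathcal{S}_{\xv}$, the restricted PO/AO pair inherits the hypotheses of the unrestricted pair, so part~(a) combined with hypothesis~(iii) gives $\Upphi_{\mathcal{S}^{c}}(\Cm) \pto \overline{\phi}_{\mathcal{S}^{c}}$, while hypothesis~(ii) gives $\Upphi(\Cm) \pto \overline{\phi}$. On the event $\{\xv_{\Upphi} \in \mathcal{S}^{c}\}$ the definition of the minimizer forces $\Upphi_{\mathcal{S}^{c}}(\Cm) = \Upphi(\Cm)$; if this event had non-vanishing probability, passing to the limit along it would compel $\overline{\phi} = \overline{\phi}_{\mathcal{S}^{c}}$, contradicting the strict separation of hypothesis~(i). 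Hence $\mathbb{P}[\xv_{\Upphi} \in \mathcal{S}] \to 1$. The auxiliary hypothesis that $\xv_{\upphi} \in \mathcal{S}$ with high probability is what renders the strict separation~(i) verifiable in applications, since it typically prevents the restricted AO cost from approaching the same limit as the unrestricted AO cost.
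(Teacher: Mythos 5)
First, a point of comparison: the paper does not prove this theorem at all --- it imports it verbatim from \cite{thrampoulidis2016precise} (``the proof of which can be found in \cite{thrampoulidis2016precise}''), so you are supplying an argument where the paper supplies a citation. Your strategy for part (a) --- Gordon's comparison inequality for the one-sided tail bound, then Sion's minimax theorem to swap $\min$ and $\max$ (using convexity--concavity and compactness) and re-apply Gordon to control the opposite tail --- is indeed the strategy of the cited proof. One technical caveat: the single covariance inequality you invoke, $\mathbb{E}[X_{\xv,\yv}X_{\xv',\yv'}]\le\mathbb{E}[Y_{\xv,\yv}Y_{\xv',\yv'}]$ for \emph{all} pairs, is not Gordon's hypothesis and is false in general (take $\xv'=-\xv$, $\yv'=-\yv$: the difference equals $-3\|\xv\|_2^2\|\yv\|_2^2<0$). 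Gordon's lemma requires equality of variances together with inequalities whose direction depends on whether the two index points share the same minimization variable, and it is this three-part condition that the bilinear processes satisfy.

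The genuine gap is in part (b). You claim that ``the restricted PO/AO pair inherits the hypotheses of the unrestricted pair, so part (a) combined with hypothesis (iii) gives $\Upphi_{\mathcal{S}^c}(\Cm)\pto\overline{\phi}_{\mathcal{S}^c}$.'' This step fails: $\mathcal{S}^c=\mathcal{S}_\xv\setminus\mathcal{S}$ is compact but in general \emph{not convex}, and part (a) relies on convexity of the feasible set (through the Sion swap) precisely to obtain the upper tail of the two-sided bound. For the restricted problem only the convexity-free, one-sided Gordon inequality $\mathbb{P}[\Upphi_{\mathcal{S}^c}(\Cm)\le c]\le 2\,\mathbb{P}[\upphi_{\mathcal{S}^c}(\gv_1,\gv_2)\le c]$ is available; combined with (iii) it yields only a high-probability \emph{lower} bound $\Upphi_{\mathcal{S}^c}(\Cm)>\overline{\phi}_{\mathcal{S}^c}-\epsilon$, not two-sided convergence --- and this asymmetry is exactly why the theorem controls deviations of $\upphi_{\mathcal{S}^c}$ only from below. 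Fortunately the lower bound is all that is needed, and using it also repairs your contradiction step, which as written (``passing to the limit along an event of non-vanishing probability'') is not a legitimate limiting operation. The correct finish: fix $\epsilon>0$ with $2\epsilon<\overline{\phi}_{\mathcal{S}^c}-\overline{\phi}$; on the intersection of the events $\{\Upphi(\Cm)<\overline{\phi}+\epsilon\}$ (from part (a) and (ii)) and $\{\Upphi_{\mathcal{S}^c}(\Cm)>\overline{\phi}_{\mathcal{S}^c}-\epsilon\}$ (from the one-sided bound and (iii)), whose probability tends to one, we have $\Upphi_{\mathcal{S}^c}(\Cm)>\Upphi(\Cm)$, so no minimizer of the unrestricted PO can lie in $\mathcal{S}^c$, i.e.\ $\xv_{\Upphi}\in\mathcal{S}$.
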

%
When the assumptions of Theorem~\ref{Th:CGMT} are met, the CGMT-based proof proceeds in general as follows:\\
$\bullet$ \textbf{Identifying the PO and the associated AO}: This step involves transforming the original optimization problem into the desired minimax PO form, and then identify its corresponding AO problem. \\
$\bullet$ \textbf{Simplifying the AO}: In this step, the AO is reduced into a scalar optimization problem.\\
$\bullet$ \textbf{Probabilistic analysis of the AO}: In this step, we prove that the AO converges to a (deterministic) asymptotic optimization problem which involves only scalar variables.\\
$\bullet$ \textbf{Choice of $\mathcal{S}$}: The set $\mathcal{S}$ should be selected properly based on the measure of interest. For instance, if we want to analyze the MSE or the EER, $\mathcal{S}$ will be the set in which the MSE or the EER concentrates, respectively.

After introducing the CGMT framework, we prove Theorem~\ref{EN_mse} next. For clarity, the steps of the proof are divided into different subsections.
\subsection{PO and AO Identification}
To obtain the result of the main theorems using CGMT, we need first to rewrite the Box-LASSO optimization problem \eqref{Box-EN_1} as a PO problem.
For convenience, we consider the \textit{error vector} 
\begin{align}
\ev := \sv - \sv_0, 
\end{align}
and the modified Box-set for all $j\in \{1,2, \cdots, n\}$:
\begin{equation}
\mathbb{D} =\bigg \{ e_j \in \mathbb{R} : \ell - s_{0,j} \leq e_j \leq \mu - s_{0,j} \bigg\},
\end{equation}
then the problem in (\ref{Box-EN_1}) can be reformulated as
\begin{align}\label{Lasso_w}
\widehat{\ev} = {\rm{arg}} \ \underset{\ev \in \mathbb{D}^n}{\operatorname{\min}} \ & \bigg\| \sqrt{ \frac{P_d}{n}} \widehat{\Hm} \ev + \sqrt{ \frac{P_d}{n}} \Omegam \sv_0 -\vv \bigg\|_2^2 +  \gamma P_d \| \ev + \sv_0 \|_1.
\end{align}\noindent
This minimization is not in the PO form as it is missing the max part. So to fix this, let us express the \emph{loss} function using its Legendre-Fenchel transformation\footnote{For any convex function $f$, we can write: $f(\xv) = \max_{\yv} \yv^T \xv - f^\star(\yv)$, where $f^\star$ is the Fenchel (convex) conjugate of $f$.}:
\begin{equation}
\| \xv \|_2^2 = \max_{\yv \in \mathbb{R}^m}  \yv^T \xv - \frac{1}{4} \| \yv\|_2^2.
\end{equation}
Hence, the problem above is equivalent to the following
\begin{align}\label{LASSO-PO1}
\underset{\ev \in \mathbb{D}^n}{\operatorname{\min}} \ \underset{\yv \in \mathbb{R}^m}{\operatorname{\max}} & \ \sqrt{ \frac{P_d}{n}}  \yv^T  \widehat{\Hm} \ev + \sqrt{ \frac{P_d}{n}}   \yv^T  \Omegam \sv_0 - \yv^T \vv - \frac{1}{4} \| \yv \|_2^2+  \gamma P_d \| \ev + \sv_0 \|_1.
\end{align}
One technical requirement of the CGMT is the compactness of the feasibility set over $\yv$. This can be handled following the approach in \cite[Appendix A]{thrampoulidis2016precise}, by introducing a sufficiently large {artificial} constraint set 
\begin{align}
\mathcal{S}_\yv = \biggl\{ \yv \in \mathbb{R}^m: \| \yv \|_2 \leq C_y \biggr \},
\end{align}
for some sufficiently large constant (independent of $n$) $C_y>0$. This will not asymptotically affect the optimization problem. Then, we obtain
\begin{align}\label{LASSO-PO22}
\underset{\ev \in \mathbb{D}^n}{\operatorname{\min}} \ \underset{\yv \in \mathcal{S}_\yv}{\operatorname{\max}}& \ \sqrt{ \frac{P_d \sigma_{\widehat{H}}^2}{n}}  \yv^T  \widetilde{\Hm} \ev + \sqrt{ \frac{P_d \sigma_{\omega}^2}{n}}   \yv^T  \widetilde\Omegam \sv_0 - \yv^T \vv - \frac{1}{4} \| \yv \|_2^2 +  \gamma P_d \| \ev + \sv_0 \|_1,
\end{align}
where $\widetilde{\Hm}$ and $\widetilde{\Omegam}$ are independent matrices with i.i.d. $\mathcal{N}(0,1)$ entries each.
The above problem is now in a \textit{PO form} with 
\begin{align}
\xi(\ev,\yv) = \ &\sqrt{ \frac{P_d \sigma_{\omega}^2}{n}}   \yv^T  \widetilde\Omegam \sv_0- \yv^T \vv    - \frac{1}{4} \| \yv \|_2^2+  \gamma P_d \| \ev + \sv_0 \|_1.
\end{align}
Thus,  its \textit{associated AO} is given as
\begin{align}\label{LASSO-AO1}
\underset{\ev \in \mathbb{D}^n}{\operatorname{\min}} \ \underset{\yv \in \mathcal{S}_\yv}{\operatorname{\max}} & \ \sqrt{ \frac{P_d \sigma_{\widehat{H}}^2}{n}}  \|\ev\|_2 \gv^T\yv + \sqrt{ \frac{P_d \sigma_{\widehat{H}}^2}{n}}  \|\yv\|_2 \zv^T\ev + \xi(\ev,\yv),
\end{align}
where $\gv \sim \mathcal{N}(\mathbf{0},\Id_m)$ and $\zv \sim \mathcal{N}(\mathbf{0},\Id_n)$ are independent random vectors.
\subsection{AO Simplification} 
In order to simplify the AO, we first let 
$$
\widetilde{\gv}:= \sqrt{ \frac{P_d \sigma_{\widehat{H}}^2}{n}}  \|\ev\|_2 \gv - \vv + \sqrt{ \frac{P_d \sigma_{\omega}^2}{n}}   \widetilde\Omegam \sv_0.
$$
It can be shown that $\widetilde{\gv} \sim \mathcal{N}(\mathbf{0}, \Sigmam_{\widetilde{\gv}}),$
where 
\begin{align}
\Sigmam_{\widetilde{\gv}} = \left(   \frac{P_d \sigma_{\widehat{H}}^2}{n} \| \ev\|_2^2 +1 +\frac{P_d \sigma_{\omega}^2}{n} \| \sv_0 \|_2^2    \right) \Id_m.
\end{align}
Then, the AO can be rewritten as
\begin{align}\label{LASSO-AO2}
\underset{\ev \in \mathbb{D}^n}{\operatorname{\min}} \ \underset{\yv \in \mathcal{S}_\yv}{\operatorname{\max}} \ & \widetilde\gv^T\yv + \sqrt{ \frac{P_d \sigma_{\widehat{H}}^2}{n}}  \|\yv\|_2 \zv^T\ev -\frac{\| \yv\|_2^2}{4} + \gamma P_d \| \ev + \sv_0 \|_1.
\end{align}

Fixing the norm of $\yv$ to $\alpha: =\| \yv \|_2$, we can easily optimize over its direction. This simplifies the AO to
\begin{align}\label{LASSO-AO22}
\underset{\ev \in \mathbb{D}^n}{\operatorname{\min}} \ \max_{\alpha \geq 0} & \ \alpha \|\widetilde\gv\|_2 + \sqrt{ \frac{P_d \sigma_{\widehat{H}}^2}{n}} \ \alpha \ \zv^T\ev  -\frac{\alpha^2}{4} + \gamma P_d \| \ev + \sv_0 \|_1.
\end{align}
\subsection{Asymptotic Analysis of the AO}
Next, we need to normalize the above optimization problem by $\frac{1}{n}$, to have all of its terms of the same order, $\mathcal{O}(1)$, and also define
\begin{align}
\beta: = \frac{\alpha}{\sqrt n}.
\end{align}
Then, after a change of the order of the $\min$-$\max$\footnote{\cite{thrampoulidis2016precise} shows that flipping the $\min$-$\max$ order is possible for large dimensions.}, we obtain:
\begin{align}\label{LASSO-AO23}
\max_{\beta \geq 0}  \underset{\ev \in \mathbb{D}^n}{\operatorname{\min}} & \ \beta \sqrt{\frac{P_d \sigma_{\widehat{H}}^2}{n} \| \ev\|_2^2 +1 +\frac{P_d \sigma_{\omega}^2}{n} \| \sv_0 \|_2^2} \ \ \frac{\|\gv\|_2}{\sqrt n} \nonumber \\
&+ \sqrt{ {P_d \sigma_{\widehat{H}}^2}}  \beta \frac{1}{n} \zv^T\ev -\frac{\beta^2}{4} + \frac{\gamma P_d}{n} \| \ev + \sv_0 \|_1.
\end{align}
Note the abuse of notation for $\gv$ to represent another standard normal vector.

To have a separable optimization problem, we use the following variational identity:
\begin{align}
\sqrt{x} = \min_{\lambda>0}  \frac{1}{2 \lambda} + \frac{ \lambda x}{2 },\ \text{for} \ x>0,
\end{align}
with optimum solution $\hat \lambda = \frac{1}{\sqrt x}$.
Using this trick,  with 
\begin{align}\label{eq:pre-MSE}
x = \frac{P_d \sigma_{\widehat{H}}^2}{n} \| \ev\|_2^2 +1 +\frac{P_d \sigma_{\omega}^2}{n} \| \sv_0 \|_2^2,
\end{align}
the optimization in (\ref{LASSO-AO23}) becomes
\begin{align}\label{AA24}
\max_{\beta \geq 0}  \min_{\lambda>0} & \ \frac{\beta \| \gv \|_2}{2 \lambda \sqrt{n}}+ \frac{\beta \lambda \| \gv \|_2}{2 \sqrt{n}} \left( 1+ \frac{P_d  \sigma_{\omega}^2}{n} \| \sv_0 \|_2^2 \right) -\frac{\beta^2}{4}  \nonumber \\ 
&+ \min_{\ev \in \mathbb{D}^n} \biggl\{   \frac{\beta \lambda \| \gv \|_2}{2 \sqrt n} \frac{P_d  \sigma_{\widehat{H}}^2}{n} \| \ev \|_2^2 + \sqrt{ {P_d \sigma_{\widehat{H}}^2}}  \beta \frac{\zv^T\ev }{n}+ \frac{\gamma P_d}{n} \| \ev + \sv_0 \|_1 \biggr\}.
\end{align}
Using the weak law of large numbers (WLLN), 
\begin{align}
 \frac{\| \gv\|_2}{\sqrt{n}} \pto \sqrt \eta,
\end{align}
and
\begin{align}
\frac{1}{n} \| \sv_0 \|_2^2 \pto \kappa \sigma_s^2=\kappa.
\end{align}
Next, using the above convergence results, and working with the original optimization variable $\sv$ instead of $\ev$, we get
\begin{align}\label{AA25}
\max_{\beta \geq 0}  \min_{\lambda>0} & \ \frac{\beta \sqrt{\eta}}{2 \lambda }+ \frac{\beta \lambda \sqrt{\eta}}{2 } \left( 1+ {P_d  \sigma_{\omega}^2} \kappa   \right) -\frac{\beta^2}{4}  \nonumber \\ 
&+\frac{1}{n}\sum_{j=1}^n \min_{\ell \leq s_j \leq \mu } \biggl\{  \frac{\beta \lambda \sqrt{\eta}}{2 } {P_d  \sigma_{\widehat{H}}^2} (s_j - s_{0,j})^2+ \sqrt{ {P_d \sigma_{\widehat{H}}^2}}  \beta z_j (s_j -s_{0,j})  + {\gamma P_d} | s_j | \biggr\}.
\end{align}
After a completion of squares in $s_j$, and noting that $\frac{1}{n} \zv^T \sv_0 \pto 0$, the above problem becomes
\begin{align}\label{AA26}
\max_{\beta \geq 0}  \min_{\lambda>0} & \ \frac{\beta \sqrt{\eta}}{2 \lambda }+ \frac{\beta \lambda \sqrt{\eta}}{2 } \left( 1+ {P_d  \sigma_{\omega}^2} \kappa   \right) -\frac{\beta^2}{4}  - \frac{1}{n} \sum_{j=1}^n \frac{\beta}{2 \lambda \sqrt{\eta}} z_j^2  \nonumber \\ 
&+\beta \lambda \sqrt{\eta} P_d  \sigma_{\widehat{H}}^2 \ \frac{1}{n}\sum_{j=1}^n \min_{\ell \leq s_j \leq \mu } \Biggl\{  \frac{1}{2} \Biggl( s_j- \biggl(  s_{0,j} + \frac{z_j}{\lambda \sqrt{\eta P_d  \sigma_{\widehat{H}}^2 }}\biggr) \Biggr)^2  + \frac{\gamma}{\beta \lambda \sqrt{\eta} \sigma_{\widehat{H}}^2} | s_j | \Biggr\} \nonumber\\
= & \max_{\beta \geq 0}  \min_{\lambda>0}  G(\beta,\lambda,\zv,\sv_0).
\end{align}
The optimization over $s_j$ could be obtained in closed-form using the \textit{saturated} shrinkage function $\mathcal{H}(a ; \gamma, l, u)$ which is defined in \eqref{soft_TH}. 
Also, let its optimal objective be $\mathcal{J}(a;\gamma,l,u) =\min_{l \leq s \leq u} \frac{1}{2}(s-a)^2 + \gamma |s|$ as defined in \eqref{optimal_Threshold}.
Now, again, using the WLLN, we have 
\begin{align}
\frac{1}{n} \sum_{j=1}^n z_j^2 \pto 1, 
\end{align}
and for all $\beta>0$, and $\lambda>0$:
\begin{align}
&\frac{1}{n} \sum_{j =1}^n \mathcal{J} \left(s_{0,j}+\frac{z_j}{\lambda \sqrt{\eta P_d  \sigma_{\widehat{H}}^2 }}; \frac{\gamma}{\beta \lambda \sqrt{\eta}  \sigma_{\widehat{H}}^2 }, \ell,\mu \right) 
\pto \mathbb{E} \left[\mathcal{J} \left(S_{0}+\frac{Z}{\lambda \sqrt{\eta P_d  \sigma_{\widehat{H}}^2 }}; \frac{\gamma}{\beta \lambda \sqrt{\eta}  \sigma_{\widehat{H}}^2 }, \ell,\mu \right)   \right],
\end{align}
where the expectation is taken over $S_0 \sim p_{s_0}$ and $Z \sim \mathcal{N}(0,1)$.\\
Consequently, the objective function in \eqref{AA26}, i.e., $G(\beta,\lambda,\zv,\sv_0)$, converges point-wise to the quantity $\mathcal{G}(\beta,\lambda)$, defined in \eqref{eq:scalar1}, in the limit of $n \to \infty$.
Afterwards, observe that $G(\beta,\lambda,\zv,\sv_0)$ is convex in $\lambda$ and concave in $\beta$. With these, and using Theorem 2.7 in \cite{newey1994large}, it follows that 
\begin{align}
\max_{\beta \geq 0}  \min_{\lambda>0}  G(\beta,\lambda,\zv,\sv_0) \pto \max_{\beta \geq 0}  \min_{\lambda>0} \mathcal{G}(\beta,\lambda).
\end{align}
Finally, the optimization problem in \eqref{AA26} simplifies to the following \emph{Scalar Optimization} (SO) problem:
\begin{align}\label{AA27}
\max_{\beta \geq 0}  \min_{\lambda>0} & \ \frac{\beta \sqrt{\eta}}{2 \lambda }+ \frac{\beta \lambda \sqrt{\eta}}{2 } \left( 1+ {P_d  \sigma_{\omega}^2} \kappa   \right) -\frac{\beta^2}{4}  - \frac{\beta}{2 \lambda \sqrt{\eta}}  \nonumber \\ 
&+\beta \lambda \sqrt{\eta} P_d  \sigma_{\widehat{H}}^2 \  \mathbb{E} \Biggl[\mathcal{J} \biggl(S_{0}+ \frac{Z}{\lambda \sqrt{\eta P_d  \sigma_{\widehat{H}}^2 }}; \frac{\gamma}{\beta \lambda \sqrt{\eta}  \sigma_{\widehat{H}}^2 }, \ell,\mu \biggr)   \Biggr] \nonumber \\
&= \max_{\beta \geq 0}  \min_{\lambda>0} \mathcal{G}(\beta,\lambda).
\end{align}
It worth mentioning that in the above equation, $ \mathcal{G}(\beta_\star,\lambda_\star)$, where $(\beta_\star,\lambda_\star)$ is the unique solution of \eqref{AA27}, represents the 
the asymptotic value of the objective function in \eqref{Box-EN_1} for a minimizer $\widehat{\sv}$, i.e.,
\begin{align}
\underset{n\to\infty}{\rm{plim}} \ \frac{1}{n}\left( \bigg\|   \sqrt{ \frac{P_d}{n}} \widehat{\Hm} \widehat{\sv} - \rv  \bigg\|_2^2 +  \gamma P_d  \| \widehat{\sv}  \|_1 \right) = \mathcal{G}(\beta_\star,\lambda_\star).
\end{align}
\figref{fig:obj} shows the great accuracy of the above result.
\begin{figure}
  \centering
\includegraphics[width=8.5cm, height =6.5cm]{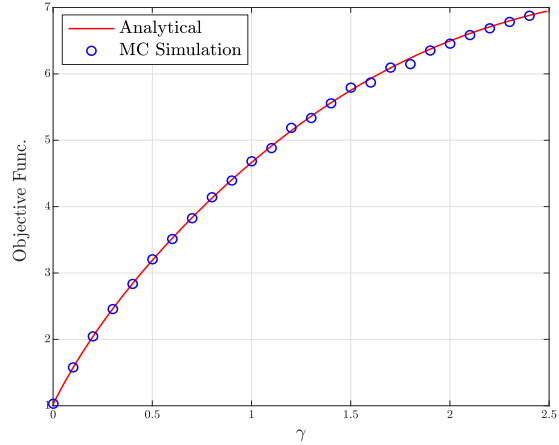}
\caption{\scriptsize{Optimal objective function value of the Box-LASSO vs. the regularizer for a sparse-Bernoulli vector. We used $\kappa =0.2,\eta = 1.5,n = 128$, $T = 500,T_t =n, \nu = 0.5, E=1,$ and $P= 15 \ {\rm{dB}}$.}}%
\label{fig:obj}
\end{figure}

After deriving the SO problem, we are now in a position to study the asymptotic convergence of the MSE. The analysis is given in the next subsection.

\subsection{Error Analysis via CGMT (Proof of Theorem~1)}
In this part, we study the asymptotic convergence of the MSE of the Box-LASSO. First, using the fact that $\hat \lambda = \frac{1}{\sqrt x}$ and recalling from (\ref{eq:pre-MSE}) that 
\begin{align}
 \frac{1}{n} \| \widetilde \ev\|_2^2= \frac{1}{P_d \sigma_{\widehat{H}}^2}  \left( \frac{1}{\hat\lambda_n^2}-1 -\frac{P_d \sigma_{\omega}^2}{n} \| \sv_0 \|_2^2 \right),
\end{align}
where $\widetilde \ev$ is the AO solution in \eqref{LASSO-AO23}, and $\hat\lambda_n$ is the solution to \eqref{AA24} in $\lambda$.
Using \cite[Lemma 10]{thrampoulidis2016precise}, it can be shown that $\hat \lambda_n \pto \lambda_\star$, where $\lambda_\star$ is the solution to \eqref{AA27}. Then, by the WLLN, $\frac{1}{n} \| \sv_0\|_2^2 \pto \kappa$, and hence
\begin{align}\label{M-eq}
\frac{1}{n} \| \widetilde{\ev} \|_2^2&\pto \frac{1}{P_d \sigma_{\widehat{H}}^2}  \left( \frac{1}{\lambda_\star^2}-1 -{P_d \sigma_{\omega}^2} \kappa   \right).
\end{align}
Recall that $\widetilde{\ev} =\widetilde{\sv} - \sv_0$,
so the last step is to use the CGMT to prove that the quantities $\widehat{\sv} - \sv_0$ and $\widetilde{\sv} - \sv_0$ are concentrated in the same set with high probability. Formally,
for any fixed $\varepsilon > 0$, we define the set: 
{{
\begin{align}
\mathcal{S}_\varepsilon = \left\{ \mathbf{q} \in \mathbb{R}^n: \left|  \frac{\| \qv \|_2^2}{n}  - \frac{ (\frac{1}{\lambda_\star^2}-1 -{P_d \sigma_{\omega}^2} \kappa  )   }{P_d \sigma_{\widehat{H}}^2}    \right| < \varepsilon \right\}.
\end{align}
}}
Equation (\ref{M-eq}) proves that for any $\varepsilon>0$, $\widetilde{\sv} - \sv_0 \in \mathcal{S}_\varepsilon$ with probability approaching one. 
Then, we conclude by applying the CGMT that $\widehat{\sv} - \sv_0 \in \mathcal{S}_\varepsilon$ with probability approaching one.
This proves the asymptotic prediction of the MSE as summarized in Theorem \ref{EN_mse}.

The \emph{residual} prediction in Remark~\ref{Remark:Res}, eq. \eqref{eq:res}, can be proven in a similar way, by first noting that 
\begin{align}
\|\hat\yv \|_2^2 = 4n \mathcal{R},
\end{align}
where $\hat\yv$ is the PO solution of \eqref{LASSO-PO1}. Using the definition: $\beta_n^2 = \frac{\|\widetilde \yv\|_2^2}{n}$, where $\beta_n, $ and $\widetilde \yv$ are the solutions of \eqref{LASSO-AO2}, and \eqref{LASSO-AO23} respectively, and following the same steps as in the MSE proof above, one can show that $\hat\yv$ and $\widetilde\yv$ concentrate in the same set with probability approaching one, and then apply the CGMT to reach the proof of the residual result in \eqref{eq:res}. Details are thus omitted.
\bibliographystyle{IEEEbib}
\bibliography{References02}
\end{document}